\begin{document}

\title{The category of \mso transductions}
\author{Miko{\l}aj Boja\'nczyk}
\maketitle 
\begin{abstract}
    \mso transductions are binary relations between structures which are defined using monadic second-order logic.  \mso transductions form a category, since they are closed under composition. We show that many notions from language theory, such as recognizability or tree decompositions, can be defined in an abstract way that only refers to \mso transductions and their compositions.
\end{abstract}



\section{Introduction}

This paper is about the connection between logic and automata, which is one of the central subjects of logic in computer science and formal language theory. The original result, due to B\"uchi, Elgot and Trakhtenbrot (see~\cite[Theorem 3.1]{Thomas97} in the survey of Thomas), says that for finite words, the languages recognized by finite automata (or equivalently, semigroups) are the same as those definable in monadic second-order logic (\mso). This result has seen countless extensions, to objects such as trees or graphs, and also for infinite objects; see~\cite{Thomas97, bojanczyk_recobook} for surveys. In some cases, the recognisability side is best modeled by automata (e.g.~nondeterministic automata for infinite trees, as surveyed in~\cite[Section 6]{Thomas97}). In other cases it is best described by algebras (e.g.~Courcelle's algebras for graphs~\cite{courcelleGraphStructureMonadic2012}, or generalizations of semigroups for countable infinite words~\cite[Section 3]{cartonRegularLanguagesWords2011}). In some cases, both automata and algebras are useful, e.g.~finite words and trees.

Since the relationship between recognisability and definability (in \mso) can be studied in so many settings, and since these studies share many techniques, it is natural to search for common generalizations. An early example of this kind is a paper of Eilenberg and Wright~\cite{eilenbergAutomataGeneralAlgebras}, which studies automata in general algebras, i.e.~algebras that do not necessarily describe finite words. The underlying definition is Lawvere theories, which are the same as finitary monads. Although the early results of Eilenberg and Wright were about free theories, which are the same as automata on ranked trees, the abstract framework can be further developed to handle other kinds of objects, as witnessed by variety theorems for general algebras that were proposed in~\cite{steinbyGeneralVarietiesTree1998,bojanczyk_recobook,chen2016profinite}.

The abstract frameworks discussed above were based on recognizability, mainly in terms of algebras, and \mso plays a minor role in them. An exception is~\cite{monadicMonadic}, where an attempt is made to relate recognisability with \mso definability over abstract structures (formalized using monads). In that work, the point of departure is recognizability, and the connection with \mso is obtained by simulating logical operations (such as Boolean combinations or set quantification) using operations on algebras (such as product or powerset).

In this paper, we take the opposite approach. Our point of departure is \mso logic, and we connect it to recognizability by framing recognizability in terms of logic. For many kinds of structures, the logical aspect is simpler than the algebra aspect, hence the usefulness of reducing algebra to logic in a uniform way. An example of this phenomenon is graphs. It is clear how \mso logic can be applied to graphs (although there are certain design decisions to be made, e.g.~whether or not the set quantifiers can range over sets of edges). In contrast, the algebraic approach to graphs, namely Courcelle's \hr~and \vr~algebras~\cite[Section 2.3 and 2.5]{courcelleGraphStructureMonadic2012}, is arguably less natural and requires a non-trivial amount of book-keeping, such as infinitely many sorts and extra annotation with distinguished vertices or colours. A similar situation arises for matroids, where \mso definability is a clear notion, while a definition of recognizability requires a careful analysis of how a matroid can be parsed by an automaton, see~\cite[Section 3]{hlineny06}.

Our approach of reducing recognizability to logic is based on \mso transductions~\cite{arnborgLagergrenSeese1988, courcelle1991, engelfriet1991}. These are binary relations between structures. The structures could be strings, trees, graphs, etc. The inputs and outputs need not have the same type, e.g.~a string-to-graph transduction inputs a string and outputs a graph, with the output graph being not necessarily unique. Two key properties of \mso transductions are: (a) they are closed under composition; and (b) in the case of languages, i.e.~\mso transductions with Boolean outputs, they coincide with \mso definable languages. Thanks to (a), the \mso transductions form a category (objects are classes of structures, and morphisms are \mso transductions), and thanks to (b), this category describes \mso definable properties. 

As we argue in this paper, many notions from language theory can be described in terms of this category. In the rest of this introduction, we present four examples of such notions. We believe that the corresponding definitions are the main conceptual contribution of the paper. The paper also has some technical results, some of them non-trivial, which are meant to justify how the proposed framework connects to existing results. We would like to underlyine that this paper is about using \mso transductions to uniformly described notions from language theory. It seems to be a pleasing fact that \mso transductions form a category, but the properties of this category (in terms of category theory) are not the subject of this paper. We only remark that the category of \mso transductions seems to share many properties with the category of relations, since \mso transductions are nondeterministic; in particular if we restrict the category of \mso transductions to classes of structures that are finite (i.e.~contain finitely many structures up to isomorphism on structures), then the resulting category is equivalent to the category of relations on finite sets.

Our point of departure, and first definition, is based on a result of Courcelle and Engelfriet~\cite[Theorem 2.6]{courcelle1995logical}, which characterizes treewidth and cliquewidth purely in terms of \mso transductions. This characterization says that a class of graphs has bounded treewidth if and only if it is contained in the set of outputs of some tree-to-graph \mso transduction. For cliquewidth, the characterization is the same, only with a different representation of graphs as logical structures (for treewidth, the universe of the structure is both the vertices and edges, while for cliquewidth, the universe is the vertices only). Motivated by this result, we propose the following notion of width.
\begin{description}
 \item[Bounded width (Definition~\ref{def:bounded-mso-width}).] A class of structures has \emph{bounded \mso width} if it is contained in the image of an \mso transduction that inputs trees.
\end{description}
 The characterizations of Courcelle and Engelfriet show that this notion of width captures treewidth and cliquewidth for graphs; we show in this paper that it also captures branchwidth for matroids representable over finite fields. The essential idea behind the definition is that we do not fix any pre-defined semantics for a tree decomposition, but allow instead any semantics that can be formalized by an \mso transduction (as is the case for known kinds of tree decompositions). This level of generality greatly simplifies notation, while retaining all applications of tree decompositions that are related to \mso.

 The logical definition of bounded width described above is the obvious generalization of the characterizations of Courcelle and Engelfriet~\cite{courcelle1995logical}, and not a new idea on its own. However, as we explain below, the transduction approach can also capture other concepts, such as recognizability or having definable tree decompositions. 
 
 We begin with recognizability. The standard approach to recognizability for a class of structures (such as strings, trees or graphs) is to define an algebra, in which structures from the class (often extended with some kind of annotation, such as distinguished elements) are put together using pre-defined operations (e.g.~strings can be concatenated, or trees can be attached to the root of some other tree). Once the algebra has been defined, a corresponding notion of recognizability arises: a language is called recognizable if it can be recognized by a tree automaton which inputs structures represented using terms built from the pre-defined operations. Similarly to the definition of bounded width, we propose to do away with the pre-defined operations, and we require instead that a tree automaton can be used for any choice of operations that have \mso definable semantics.
\begin{description}
 \item[Recognizability (Definition~\ref{def:recognizability})] If $\Cc$ is a class of structures, a language $L \subseteq \Cc$ is called \emph{recognizable} if for every tree-to-$\Cc$ \mso transduction, the inverse image of $L$ is a recognizable tree language.
\end{description}
One perspective on the above definition is that it reduces recognizability from general classes of structures to the well-understood notion of recognizability for trees.
We show that our notion of recognizability coincides with standard notions of recognizability, using automata or algebras, in all known cases, including strings, trees, traces, graphs and matroids representable over a finite field. (We consider finite structures in this paper, but we believe that the results extend to infinite ones.) 

Since our definition can be applied to any class of structures, it gives a precise mathematical sense to the question: for which classes of structures $\Cc$ are the recognizable subsets are exactly the same as the \mso definable ones? One can easily show that \mso definability always implies recognizability, however, the converse implication may sometimes fail. It is known to fail for graphs of unbounded treewidth or cliquewidth, and we show that it necessarily fails for every class of structure with unbounded width according to our abstract definition. As we discuss below in more detail, we also conjecture that unbounded width is the only possible reason for the conjecture failing, and present some evidence in that direction.

In our investigation of recognizability for general classes of structures, we identify a new kind of reduction, which formalizes intuitive statements such as ``trees are a special case of graphs'', or ``binary relations can be reduced to the special case of symmetric ones''. We call it \mso encodings:

\begin{description}
 \item[Encoding (Definition~\ref{def:mso-encoding}).] Let $\Cc$ and $\Dd$ be classes of structures. A $\Cc$-to-$\Dd$ \mso transduction is called an \emph{\mso encoding} if it admits a one-sided inverse, i.e.~some $\Dd$-to-$\Cc$ \mso transduction such that composing the two gives the identity on $\Cc$.
\end{description}

In the terminology of category theory, the encoding is a section and its one-sided inverse is a retraction. All properties discussed in this paper, are stable under \mso encodings, i.e.~they only need to be proved for the target classes of encodings\footnote{This is not the case for other known orders of classes, e.g.~the order considered by Blumensath and Courcelle~\cite{lmcs:1208} in which $\Cc$ is smaller than $\Dd$ whenever there is a $\Dd$-to-$\Cc$ \mso transduction that is surjective.}. Motivated by this, we investigate the encoding order in more detail. Among other results, we show that binary relations are equivalent, under encodings, to matroids representable over any finite field (and the choice of field is not important), but relations of higher arity are not.

The last of the definitions proposed in this paper concerns definable tree decompositions, which are the standard way of proving that recognizability implies definability. Again, this approach can be described purely in terms of \mso transductions, as follows.
\begin{description}
 \item[Definable tree decompositions (Definition~\ref{def:definable-tree-decompositions}).] A class of structures has \emph{definable tree decompositions} if it admits some \mso encoding into the class of trees.
\end{description}
A simple corollary of the definitions is that if a class of structures has definable tree decompositions, then recognisability implies definability in \mso. We conjecture that bounded \mso width implies definable tree decompositions, and therefore for classes of bounded \mso width, recognizability is equivalent to \mso definability. Since we know that for classes of unbounded \mso width, recognizability is not equivalent to \mso definability, this conjecture would imply a complete characterization of classes where recognisability is equivalent to definability.

The rest of this paper is devoted to a detailed discussion of the above definitions, with examples and proofs of how they capture existing approaches.

\section{\mso transductions}
In this section, we recall the definition of \mso transductions~\cite{arnborgLagergrenSeese1988, courcelle1991, engelfriet1991}. These are transformations between structures such as words, trees, or graphs, which are defined using \mso logic. We assume that the reader is familiar with the basic notions of logic, such as first-order logic, and its extension monadic second-order logic \mso, which can quantify over sets of elements.

 When talking about \mso, we assume that the logic is equipped with modulo counting, which means that for every $p \in \set{2,3,\ldots}$ the logic allows a set predicate ``the number of elements in $X$ is divisible by $p$''. This variant is usually called \emph{counting \mso}, but since this is the only kind of \mso that we use\footnote{We believe that modulo counting is an essential feature of \mso for structures with non-trivial automorphisms, such as graphs. The whole point of \mso is its correspondence with recognizable languages, and for many structures, this correspondence only holds when modulo counting is allowed. An early example of this kind is multisets~\cite[Proposition 6.2]{courcelleMonadicSecondorderLogic1990}. One could argue that \mso without modulo counting is a historical artefact, which arose when the logic was applied to structures such as strings where modulo counting is redundant. According to this perspective, the Seese conjecture has already been solved by Courcelle and Oum~\cite[Theorem 5.6]{courcelle2007vertex}.}, we will simply call it \mso. 

All structures in this paper are finite, in the sense that the universe has finitely many elements. 

\begin{myexample}[Trees]\label{ex:trees}
 Consider trees that are unranked (there is no bound on the number of children), unordered (there is no linear order on the children), and where the nodes are labelled by some alphabet $\Sigma$. Such a tree can be represented as a structure, where the universe is the nodes of the tree, there is a binary ``parent'' relation, and for each label in the alphabet, there is a unary relation that selects nodes with this label.
 Using \mso over this structure, one can describe properties such as ``every node has an even number of children with label $a$''. The expressive power of \mso over such trees is equivalent to tree automata~\cite[Theorem 5.3]{courcelleMonadicSecondorderLogic1990}. When we talk about ``trees'' as a class of structures in this paper, we mean the class of unranked, unordered trees, without labels (which means that the alphabet $\Sigma$ has one letter only).
\end{myexample}

It is clear how an automaton should parse a tree. In contrast, graphs do not have canonical parse trees (save for degenerate cases, such as series parallel graphs), hence the difficulty of defining automata models for graphs. There are no such issues with \mso for graphs. 

\begin{myexample}\label{ex:graphs}[Graphs] 
 When talking about graphs, we mean undirected graphs without self-loops (both choices are non-essential). A graph with vertices $V$ and edges $E$ can be represented as a structure in at least two different ways. In the first representation, which we call \emph{edge representation}, the universe is the vertices, and there is a binary relation for the edges. In the second representation, which we call \emph{incidence representation}, the universe is the vertices and edges, and there is a binary relation for the incidence between a vertex and an edge. 
 For both representations, the vocabulary has one binary relation over elements of the model. When defining graph properties in first-order logic, the two representations are equivalent, because an edge can be represented as a pair of vertices. For \mso, however, the edge representation gives more power. For example, Hamiltonicity can be defined in the incidence representation but not the edge representation~\cite[Proposition 5.13]{courcelleGraphStructureMonadic2012}; the issue is that the incidence representation allows quantification over sets of edges while the edge representation does not. \end{myexample}

We also allow relations that express properties not only of elements, but also of sets. 
\begin{myexample}[Hypergraphs] \label{ex:hypergraphs}
 A \emph{hypergraph} is defined to be a set $V$ together with a family of distinguished subsets of $V$ called \emph{hyperedges}.
 The hyperedges can be seen as a unary relation on sets (and not on elements).
\end{myexample}

\subparagraph*{Vocabularies, structures and transductions.}
Motivated by the above examples, we use structures where relations describe properties of elements and/or sets. To distinguish between arguments of relations that are elements or sets, we use lowercase letters for elements and uppercase letters for sets. For example, a relation $R(x,y,Z)$ has type (element, element, set), because the first two arguments are elements and the last argument is a set. More formally, a \emph{vocabulary} is defined to be a set of relation names, each one with an associated type in $ \set{\text{element, set}}^*$. 
 A \emph{structure} over given a vocabulary consists of a universe, together with an interpretation that assigns to each relation name a relation over the universe of the corresponding type. We require the universe to be finite and nonempty. Finally, a \emph{class of structures} is defined to be any class which contains structures over some common vocabulary, and which is invariant under isomorphism. We do not require the class to be definable in logic. We write $A,B,C$ for structures and $\Cc,\Dd,\Ee$ for classes of structures.

Define a \emph{transduction} to be any binary relation between two classes of structures that is invariant under isomorphism. A transduction can be seen as a nondeterministic operation, where one input may yield multiple outputs. An example of a transduction is the operation that inputs a graph, and outputs the same graph together with a pre-order on its vertices that describes a spanning forest. The central object of this paper is those transductions that can be defined using \mso logic.

\begin{definition}[\mso transduction] \label{def:mso-transduction} An \mso transduction is a transduction that is obtained by composing any finite number of the following \emph{elementary \mso transductions}\footnote{In this definition, an \mso transduction can use any number of elementary \mso transductions, which immediately implies that \mso transductions are closed under composition. One can reduce any \mso transduction to a \emph{normal form}, which uses four elementary \mso transductions: colouring, then copying, then filtering, then interpretation. Some authors define \mso transductions to be those in normal form, and prove closure under composition as a theorem~\cite[Theorem 7.14]{courcelleGraphStructureMonadic2012}.
 } :
 \begin{itemize}
 \item \emph{Interpretation.} Fix some input and output vocabularies. An \mso interpretation is a partial function from structures over the input vocabulary to structures over the output vocabulary that is described by \mso formulas over the input vocabulary in the following way. There is a \emph{universe formula} $\varphi(x)$; the universe of the output structure is defined to be those elements in the input structure that satisfy the universe formula. Next, for every relation name in the output vocabulary, there is a corresponding \emph{relation formula} of the same type (this means that if the relation name has $k$ element arguments and $\ell$ set arguments, then the same is true for the corresponding relation formula), which defines the relation in the output structure. 
 All of these formulas need to be consistent, which means that the relation formulas use only elements that satisfy the universe formula.
 \item \emph{Filtering.} A \mso formula without free variables defines a partial function, which inputs a structure over its vocabulary, outputs the same structure if the formula is true, and is otherwise undefined.
 \item \emph{Copying.} For $k \in \set{2,3,\ldots}$, define $k$-copying to be following function on structures. The output structure is the result of taking $k$ disjoint copies of the input structure and adding a new $k$-ary relation
 which is interpreted as the set of tuples $(a_1,\ldots,a_k)$ that arise by choosing some input element and returning all of its copies. The copies are ordered, i.e.~$a_1$ is the first copy, $a_2$ is the second copy, and so on, which means that the $k$-ary relation is not closed under permuting its arguments. 
 \item \emph{Colouring.} For $k \in \set{1,2,\ldots}$, define $k$-colouring to be the following nondeterministic operation. An output structure is any structure that can be obtained from the input structure by adding $k$ unary relations and interpreting them in any way such that every element satisfies exactly one of the new unary relations. If the input structure has $n$ elements, then there are $k^n$ possible outputs.
 \end{itemize}
\end{definition}

Formally, at the end of an \mso transduction we add an implicit last step, which replaces the output structure with any structure that is isomorphic to it; this guarantees that the \mso transduction is invariant under isomorphism of structures.
This completes the definition of \mso transductions. 

 As defined above, the domain and co-domain of an \mso tranductions are all structures over a given vocabulary. We can restrict them to classes of structures in the following way: if $\Cc$ and $\Dd$ are classes of structures, a $\Cc$-to-$\Dd$ \mso transduction is defined\footnote{This definition ensures that \mso transductions are closed under compositions; this would not be the case if we define a $\Cc$-to-$\Dd$ transduction to be an \mso transduction restricted to pairs from $\Cc \times \Dd$. } to be any \mso transduction from the vocabulary of $\Cc$ to the vocabulary of $\Dd$, which has the property that if the input is from $\Cc$, then all outputs are from $\Dd$. 
We use arrow notation $f : \Cc \to \Dd$ for \mso transductions, even when they are not necessarily functions.


\begin{myexample}[Languages as transductions]\label{ex:language-as-transduction}
 Define $\Bool$ to be a class of structures that has one structure up to isomorphism. The choice of this particular structure is not important, say the vocabulary is empty and the universe has one element. In a $\Cc$-to-$\Bool$ \mso transduction, there are two options for every input: either there are no outputs, or there is one output. Furthermore, an \mso formula can tell which of the two options arises. Therefore, $\Cc$-to-$\Bool$ \mso transductions are the same as an \mso definable languages $L \subseteq \Cc$.
\end{myexample}


\begin{myexample}[Dual]\label{ex:dual}
Consider the \emph{dual} operation, which inputs a graph, and outputs a graph where the vertices are the edges of the input graph, and where two vertices of the output graph are adjacent if the corresponding edges in the input graph share a vertex. This transformation is easily seen to be an \mso transduction under incidence representation. 
 However, it is not an \mso transduction under edge representation, since the universe of the output structure can be quadratic in the universe of the input structure, while \mso transductions increase the universe in an at most linear way.
\end{myexample}

\begin{myexample}[String duplication]
 Although \mso transductions are binary relations, they are not symmetric in the input and output. As an example, consider the binary relation $(w,ww)$ that describes string duplication,
with strings viewed as finite labelled linear orders. This is a partial bijection, but it is an \mso transduction only in one direction, namely from $w$ to $ww$.
\end{myexample}

By definition, \mso transductions are closed under composition. Also, the identity is an \mso transduction. Hence, \mso transductions form a category. 

\begin{definition}
 The category of \mso transductions is the category where the objects are classes of structures, and morphisms between objects $\Cc$ and $\Dd$ are $\Cc$-to-$\Dd$ \mso transductions. 
\end{definition}

In the above definition,
we do not require the classes to be definable in \mso, although this choice seems to have little importance.

We begin with the most basic aspect of this category, namely isomorphism of objects, i.e.~isomorphism between classes of structures. Recall that an \emph{isomorphism} in a category is a morphism which admits a two-sided inverse. (Note that there are two kinds of isomorphism: an isomorphism between two structures, and an isomorphism between two classes of structures.) 
Later in the paper, we will also be interested in one-sided inverses, in fact, \mso transductions with one-sided inverses will be the key definition of this paper, Definition~\ref{def:mso-encoding}. Note that an \mso transduction with a two-sided inverse is necessarily functional, i.e.~every input has exactly one output up to isomorphism.

\begin{myexample}[Edge vs incidence] \label{ex:acyclic-graphs} Consider the two classes of graphs, under edge and incidence representations, respectively, as discussed in Example~\ref{ex:graphs}. These classes are not isomorphic in the category. If there was such an isomorphism, then the dual operation from Example~\ref{ex:dual} would be an \mso transduction also using the edge representation, which it is not. However, if we restrict the classes to acyclic graphs, under the two representations, then they become isomorphic. The nontrivial part is going from the edge representation to the incidence representation, here we use the observation that the edges in an acyclic graph can be represented by its vertices. 
\end{myexample}

As we will argue in Section~\ref{sec:encodings}, it is more useful to consider weaker forms of equivalence than isomorphism. However, even isomorphism can be useful, e.g.~for Cartesian products. The category of \mso transductions has Cartesian products, which is achieved by any natural pairing construction on structures, e.g.~taking the disjoint union of two structures. One advantage of the categorical approach is that one does not need to specify the pairing construction, as long as it is known to exist, since Cartesian products are unique up to isomorphism in the category. The same remarks apply to coproducts.

\section{Tree decompositions}
\label{sec:tree-decompositions}
This section gives a first example of a classical combinatorial notion that can be described purely in terms of morphisms from the category of \mso transductions. This example is not new -- it is based on results of Courcelle and Engelfriet~\cite{courcelle1995logical}. Later in the paper, we give new examples, covering notions such as recognizability, or definable tree decompositions. Readers unfamiliar with treewidth and cliquewidth can simply treat the characterization in Theorem~\ref{thm:courcelle-engelfriet-width-logical} below as the definition of these concepts.
In the theorem, an \mso transduction is called \emph{surjective} if it is surjective as a binary relation, i.e.~every output structure is produced from at least one input structure. (This is a strictly weaker notion than being an epimorphism in the category.) A \emph{subclass of trees} is defined to be any class contained in the class of trees, in the unranked, unordered and unlabelled variant from Example~\ref{ex:trees}. (Other formalizations of trees, e.g.~labelled binary trees, would give the same notion.)

\begin{theorem}[Courcelle and Engelfriet]\label{thm:courcelle-engelfriet-width-logical}
 Let $\Cc$ be a class of graphs. Then 
 \begin{itemize}
 \item \cite[Theorem 2.6]{courcelle1995logical} $\Cc$ has bounded treewidth if and only if there is a surjective \mso transduction from some subclass of trees to the class
 \begin{align*}
 \setbuild{\text{incidence representation of $G$}}{$G \in \Cc$}.
 \end{align*}
 
 \item \cite[Theorem 3.1]{courcelle1995logical} $\Cc$ has bounded cliquewidth if and only if there is a surjective \mso transduction from some subclass of trees to the class
 \begin{align*}
 \setbuild{\text{edge representation of $G$}}{$G \in \Cc$}.
 \end{align*}
 \end{itemize}
\end{theorem}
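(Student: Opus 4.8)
The proof is two implications for each of the two items, and the engine driving all four is the algebraic characterization of the respective width measure. Recall that bounded cliquewidth ($\leq k$) means every graph in $\Cc$ is the value of a term over the \vr-algebra using $k$ colours, and bounded treewidth ($\leq k$) means every graph is the value of a term over the \hr-algebra of width $k$. In both cases the signature is finite, so the terms are finite labelled trees; using the equivalence between the paper's unordered unlabelled trees and labelled (ranked) trees noted before the statement, these terms form a \emph{subclass of trees}. The whole argument is a translation between ``term over a finite algebra'' and ``tree, evaluated by an \mso transduction''.

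For the \emph{only if} direction, suppose the width is $\leq k$. I would take the subclass of trees $T$ to be exactly those terms over the relevant finite signature whose value lies in $\Cc$ (this need not be \mso-definable, but the paper allows arbitrary isomorphism-closed classes), and define the transduction to be evaluation: a term is sent to the incidence (resp.\ edge) representation of the graph it denotes. The verification that evaluation is an \mso transduction is the routine part: the output universe (vertices, resp.\ vertices and edges) is a bounded number of copies of the term's leaves and is \mso-definable, and the edge/incidence relation is \mso-definable by tracing colours along the term --- for \vr, an edge $u$--$v$ is present exactly when some edge-adding ancestor node joins the current colours of $u$ and $v$, and ``the colour of a vertex at a node'' is \mso-definable by following recolourings up the path from the corresponding leaf. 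Surjectivity onto the stated codomain is immediate, since by the width bound every graph in $\Cc$ arises as the value of such a term.

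The \emph{if} direction carries the real content and rests on a preservation principle: \mso transductions do not unboundedly increase the relevant width. Precisely, the image of a class of bounded cliquewidth under an \mso transduction (edge representation) again has bounded cliquewidth, and the image of a class of bounded treewidth under an \mso transduction (incidence representation) again has bounded treewidth. Granting this, the converse is a one-line deduction: any subclass of trees has cliquewidth and treewidth bounded by a small absolute constant, so applying the preservation principle to the given surjective transduction shows that its image --- which is all of $\Cc$ --- has bounded width.

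The main obstacle is therefore proving this preservation principle, which is not among the results available from the excerpt. I would prove it by pushing the transduction through the generating terms: given a term witnessing bounded width of an input tree together with the four elementary operations (colouring, copying, filtering, interpretation) that make up the transduction, one constructs a term for the output graph over a still-finite signature. Copying multiplies the colour/port set by a constant; colouring and the \mso interpretation are absorbed by enlarging the colour set and recomputing the output relations locally along the term, justified by the backwards-translation (Feferman--Vaught type) property of \mso; filtering merely restricts the domain. The delicate bookkeeping is keeping the colour/width parameter bounded through these steps, and in handling the representation correctly: treewidth forces the \emph{incidence} representation precisely because the preservation argument for treewidth needs \mso to manipulate edges as first-class elements, which only the incidence representation provides, whereas cliquewidth is the invariant matched to the \emph{edge} representation.
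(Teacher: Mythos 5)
You cannot be compared against a proof in the paper, because the paper gives none: Theorem~\ref{thm:courcelle-engelfriet-width-logical} is imported verbatim from Courcelle and Engelfriet, and the paper explicitly invites readers to treat it as the \emph{definition} of bounded treewidth and cliquewidth. That said, your reconstruction is the standard proof of the cited results, and it matches how the paper itself packages this content where it needs it: in the proof of Theorem~\ref{thm:algebraic-vs-transducer-recognizability}, the paper remarks that the essence of Theorems 2.6 and 3.1 of Courcelle--Engelfriet is that the \hr- and \vr-algebras satisfy the two assumptions of Lemma~\ref{lem:technical-two-recognizabilities}. Your ``only if'' direction is exactly assumption~\ref{assumption:mso-definable} (evaluation of terms over a finite set of operations is an \mso transduction), and your ``if'' direction --- the preservation principle --- is assumption~\ref{assumption:factors-through-terms} in disguise: from the given transduction one extracts, for each output, a bounded-width expression, which is the same as factoring the transduction through term evaluation. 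Your sketch of this hard step (recomputing the \mso-definable output relations locally along the input tree via backward translation of types) is the same compositionality argument that the paper carries out explicitly in its matroid analogue in Appendix~\ref{ap:branchwidth-mso-width}, via sensitivity and Claim~\ref{claim:connectivity-vs-sensitivity}. Two details deserve more care than you give them. First, bridging ordered, ranked, labelled terms with the paper's unordered, unlabelled trees must preserve surjectivity onto exactly the stated class: if labels and sibling order are guessed by nondeterministic colouring, a single input tree may produce outputs both inside and outside $\Cc$, violating the definition of a $\Cc$-targeted transduction; the clean fix is to encode labels and order deterministically by tree gadgets and restrict the domain. Second, your preservation principle must be stated for transductions whose \emph{inputs} are tree structures rather than graphs; this is harmless (trees have treewidth $1$ and cliquewidth at most $3$), but it is the form actually needed.
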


The theorem motivates the following definition. 

\begin{definition} \label{def:bounded-mso-width} A class of structures $\Cc$ has \emph{bounded \mso width} if there is a surjective \mso transduction from some subclass of trees to $\Cc$. 
\end{definition}


An input tree in the above definition will be called a \emph{tree decomposition} of the output structure; the semantics of the tree decomposition will be the \mso transduction. The above definition is not changed if we require the transduction to be deterministic and total; indeed every surjective \mso transduction from a subclass of trees to $\Cc$ can be made deterministic and total by integrating the colours into the input tree and restricting the input trees. When the \mso transduction is deterministic and total, each tree decomposition represents a unique output structure (although one output structure might have several tree decompositions). A condition equivalent to Definition~\ref{def:bounded-mso-width} is that there is a surjective \mso transduction from the class of all trees to a superset of $\Cc$.

By Theorem~\ref{thm:courcelle-engelfriet-width-logical}, Definition~\ref{def:bounded-mso-width} coincides with bounded treewidth and cliquewidth for graphs, depending on the choice of representation. In Section~\ref{sec:matroids}, we will show that the same is true for representable matroids, with the corresponding notion being branchwidth.

Unlike the usual definitions of cliquewidth and treewidth, Definition~\ref{def:bounded-mso-width} does not define \mso width as a number for every structure. It only tells us when a class of structures has bounded width. From this perspective, there is no difference between width measures that are bounded by functions of each other, such as cliquewidth and hyper-rankwidth~\cite[Theorem 6.3]{oum2006approximating}. However, the notion of \mso width can also be described as a number, using the following variant of hyper-rankwidth. We present the variant for hypergraphs, but the same ideas would apply to other structures. Consider a hypergraph together with a partition of its vertices into two parts. Define the \emph{rank} of this partition to be the rank of the following matrix over the two-element field. The rows of the matrix are subsets of the first part, the columns are subsets of the second part, and the value stored by the matrix in a cell corresponding to row $X$ and a column $Y$ is zero or one, depending on whether $X \cup Y$ is a hyperedge or not. Define a width $k$ tree decomposition of a hypergraph to be a binary tree, where leaves are labeled by vertices of the hypergraph, and such that the rank is at most $k$ for every partition of the vertices that arises by splitting the tree into two parts by removing one tree edge. The \emph{rankwidth} of a hypergraph is the minimal width of its tree decompositions. The following theorem is proved in Appendix~\ref{ap:hyper-rankwidth}.

\begin{theorem}\label{thm:hypergraph-rankwidth}
 A class of hypergraphs has bounded \mso width if and only if it has bounded hyper-rankwidth.
\end{theorem}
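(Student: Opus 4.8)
The plan is to prove the two directions separately, relating the combinatorial notion of hyper-rankwidth to the logical notion of bounded \mso width. Recall that bounded \mso width means there is a surjective \mso transduction from a subclass of trees onto the class of hypergraphs, and that by the remarks following Definition~\ref{def:bounded-mso-width} we may assume this transduction is deterministic and total. So the task is to convert between a width-$k$ tree decomposition (a binary tree whose leaves are labelled by vertices, with every edge-cut inducing a low-rank partition) and an \mso transduction that reconstructs the hypergraph from such a tree.

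For the direction ``bounded hyper-rankwidth $\Rightarrow$ bounded \mso width'', I would take the input tree to be essentially the width-$k$ tree decomposition itself. The leaves, labelled by vertices, become the universe of the output hypergraph (an \mso interpretation whose universe formula selects the leaves). The challenge is to reconstruct the hyperedge relation, which is a relation on \emph{sets} of vertices. The key idea is that low rank means each split of the tree exposes only boundedly many ``types'' of subsets on either side: the rank-$k$ condition says the matrix of the cut has rank at most $k$, so the behaviour of any subset $X$ of one part, with respect to all possible completions $Y$ on the other part, is determined by one of at most $2^k$ equivalence classes. I would propagate these types bottom-up along the tree, colouring nodes (using the colouring elementary transduction together with a filtering/consistency step that enforces a correct run, exactly as for bounded-size tree automata) so that whether $X \cup Y$ is a hyperedge can be read off from the types accumulated at the two sides of the relevant cut. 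Making this an honest \mso formula whose free set variables range over the two halves determined by a node is the technical heart of this direction.

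For the converse, ``bounded \mso width $\Rightarrow$ bounded hyper-rankwidth'', I would start from a deterministic \mso transduction $f$ from trees onto the hypergraphs and produce, for each input tree $t$, a width-$k$ tree decomposition of $f(t)$ with $k$ bounded independently of $t$. After passing to a normal form, the vertices of $f(t)$ are (copies of) nodes of $t$, and I would use the tree shape of $t$ itself, suitably balanced and binarised, as the decomposition tree. The crucial point is a Feferman–Vaught / composition-method argument: since hyperedge membership of a set $X \cup Y$ is decided by an \mso formula of fixed quantifier rank $q$, and since cutting the tree at one edge splits $t$ into two subtrees communicating only through a bounded-size interface, the hyperedge-status of a split set depends only on the bounded \mso $q$-type of $X$ on one side and of $Y$ on the other. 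This bounds the rank of the cut matrix by a constant depending only on $q$ (roughly the number of $q$-types, or rather its logarithm over the two-element field). I expect this composition argument to be the main obstacle: one must verify carefully that the rank of the Boolean matrix over $\mathbb{F}_2$ is controlled by the number of \mso types, which requires the modulo-counting feature of \mso to express exactly the parity information captured by $\mathbb{F}_2$-rank, and one must handle the interface between the tree decomposition's balanced binary structure and the original transduction's less structured shape without blowing up the width.

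Finally, since both notions are invariant under the representation of hypergraphs as structures with a unary set-relation (Example~\ref{ex:hypergraphs}), no extra bookkeeping about vertex/edge encodings is needed, and the two constructions above combine to give the claimed equivalence.
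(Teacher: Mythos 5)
You have essentially reproduced the paper's proof: for the direction from bounded hyper-rankwidth to bounded \mso width, the paper likewise uses the decomposition tree itself as the input tree and propagates the at most $2^k$ equivalence classes of each cut (called \emph{sensitivity} classes there) bottom-up as states of a finite tree automaton whose transition functions are recorded on the tree nodes, and for the converse it likewise reuses the input tree of the (deterministic, interpretation-only) transduction as the decomposition and bounds each cut by an Ehrenfeucht--Fra\"iss\'e argument together with the inequality $\mathrm{rank} \le \text{number of distinct rows} \le 2^{\mathrm{rank}}$. One small miscalibration in your sketch: no modulo counting is needed to relate \mso types to $\mathbb{F}_2$-rank, since the required bound is just the trivial inequality that the rank is at most the number of distinct rows, hence at most the number of types (not its logarithm).
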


We can also define a restricted version of \mso width, which uses strings instead of trees as the decompositions. In the following theorem, a class of strings refers to any subclass of $2^*$, which is the class of strings over an alphabet with two letters. Here, a string is represented as a structure in the standard way, with unary predicates for the letters and a binary predicate for the order on positions. Using a finite alphabet with more than two letters would lead to the same notion.

\begin{definition}
 A class of structures $\Cc$ has \emph{bounded linear \mso width} if there is a surjective \mso transduction from some subclass of strings to $\Cc$. 
\end{definition}

With the same proofs, one can show a linear variant of Theorem~\ref{thm:courcelle-engelfriet-width-logical}, which uses the linear variants of treewidth and cliquewidth, called \emph{pathwidth} and \emph{linear cliquewidth}, respectively.

\section{\mso encodings}
\label{sec:encodings}
Isomorphisms in our category have some usefulness, e.g.~they allow us to ignore the implementation of pairing in Cartesian products, or they enable switching between the incidence and edge representations of acyclic graphs as in Example~\ref{ex:acyclic-graphs}. Other examples include minor differences in representations, such as using the successor instead of the order in a structure representing a string. 
However, an isomorphism can be too much to ask for, and we will have more success with a relaxed notion of equivalence, as described in this section. 

As an example, consider the classes $2^*$ and $4^*$, which are strings over alphabets of sizes two and four, respectively. These are essentially the same class, with the corresponding encoding $4^* \to 2^*$ using two output letters for each input letter. However, this encoding is not an isomorphism, because its image contains only words of even length. One could try to find an improved encoding that is indeed an isomorphism. Even if possible, such an isomorphism would likely be cumbersome. Below, we propose a relaxation of isomorphism, under which $4^*$ and $2^*$ are easily seen to be equivalent, and which is still fine enough so that equivalent classes behave the same way with respect to the properties of interest in this paper. This is one of the key definitions of the paper.

\begin{definition}[\mso encoding]\label{def:mso-encoding}
 An \emph{\mso encoding} of a class $\Cc$ into a class $\Dd$ is a $\Cc$-to-$\Dd$ \mso transduction that admits a one-sided inverse, i.e.~there is a $\Dd$-to-$\Cc$ \mso transduction, called the \emph{decoding}, 
 such that eoncoding followed by decoding is the identity on $\Cc$. 
\end{definition}

Since \mso encodings are easily seen to be closed under composition, the existence of \mso encodings is a pre-order on classes of structures, i.e.~it is a reflexive transitive relation. We call this the \emph{\mso encoding order}. The corresponding equivalence relation (i.e.~encodings both ways) is called \emph{equivalence under \mso encodings}.

\begin{myexample}[Equivalence of strings over various alphabets]\label{ex:equivalence-of-strings-over-nonunary-alphabets}
 The inclusion $2^* \to 4^*$ is an \mso encoding, and the same is true for the \mso transduction of type $4^* \to 2^*$ that uses two output letters for every input letter. Therefore, the two classes $2^*$ and $4^*$ are equivalent under \mso encodings. Similarly, $2^*$ is equivalent to $\Sigma^*$ for every alphabet $\Sigma$ with at least two letters.
\end{myexample} 

The \mso encoding order is related to another order on classes of structures, introduced by 
Blumensath and Courcelle~\cite{lmcs:1208}, which we call \emph{surjective transduction order}. In this order, a class $\Cc$ is below a class $\Dd$ if there is a surjective \mso transduction from $\Dd$ to $\Cc$. Since the decoding in Definition~\ref{def:mso-encoding} must be surjective, it follows that the \mso encoding order refines the surjective transduction order: if a class is below $\Dd$ in the \mso encoding order, then it is also below $\Dd$ in the surjective transduction order. 
 The opposite implication fails, see Example~\ref{ex:two-kinds-of-trees}.

 For some questions about classes of structures, the appropriate order is the surjective transduction order. For example, having a decidable \mso theory, or having bounded \mso width, are notions that downward closed under the surjective transduction order. However, for other notions, the finer ordering of \mso encodings is more appropriate. Two examples of such notions are discussed in Section~\ref{sec:recognizability}: definable tree decompositions, and the equivalence of definability and recognizability.



\begin{myexample}\label{ex:examples-of-encodings}
 The following sequence is strictly increasing under \mso encodings:
 \begin{enumerate}
 \item \emph{Strings.} The class $2^*$, or equivalently, $\Sigma^*$ for any alphabet $\Sigma$ with at least two letters.
 \item \label{it:trees} \emph{Trees.} The class of structures described in Example~\ref{ex:trees}, with or without labels.
 \item \label{it:graphs-incidence} \emph{Graphs under incidence representation,} see Example~\ref{ex:graphs}.
 \item \label{it:graphs-edge} \emph{Graphs under incidence representation,} see Example~\ref{ex:graphs}.
 \item \label{it:k-ary-relations} \emph{$k$-ary relations for $k \ge 3$:} all structures over a vocabulary with one relation name $R(x_1,\ldots,x_k)$. 
 \item \label{it:hypergraphs} \emph{Hypergraphs:} all structures over a vocabulary with one relation name $E(X)$, see Example~\ref{ex:hypergraphs}.
 \end{enumerate}
 For $k=2$, items~\ref{it:graphs-edge} and~\ref{it:k-ary-relations} are equivalent. Furthermore, every class of structures, not only those in the list, admits an \mso encoding to hypergraphs.
 These observations are proved in Appendix~\ref{sec:proof-of-figure}. 
 The encodings are straightforward, while the strictness of the sequence is proved using results from~\cite{lmcs:1208}, and the following growth rate argument. Consider the following table, which counts the number of structures with universe of size at most $n$:

 \begin{center}
 \begin{tabular}{l|l}
 {class} & {log(growth rate)} 
 \\ \hline 
 graphs under incidence representation & $\Theta(n \cdot \log n)$ \\
 $k$-ary relations & $\Theta(n^k)$ \\ 
 hypergraphs & $\Theta(2^n)$
 \end{tabular}
 \end{center}
 Classes that are equivalent under \mso encodings must have the same growth rate (up to linear corrections on $n$), and so the classes from the above table cannot be equivalent. 
\end{myexample}

\begin{myexample}\label{ex:two-kinds-of-trees} In this example, we show that there are two kinds of classes for trees of unbounded depth. All classes in the left column below are equivalent under \mso encodings, likewise for the classes in the right column, but the left column is strictly above the right column.
\begin{center}
 \begin{minipage}[t]{0.49\textwidth}
 \begin{enumerate}
 \item Trees as in Example~\ref{ex:trees}, with or without labels;
 \item Acyclic graphs as in Example~\ref{ex:acyclic-graphs};
 \item Hypergraphs with the following \emph{laminar} condition: every two hyperedges are either disjoint, or one is contained in the other.
 \end{enumerate} 
 \end{minipage}\quad
 \begin{minipage}[t]{0.48\textwidth}
 \begin{enumerate}
 \item Trees as in Example~\ref{ex:trees}, with or without labels, where every node has at most two children;
 \item Connected acyclic graphs as in Example~\ref{ex:acyclic-graphs}, where every vertex degree at most three;
 \item Trees as in Example~\ref{ex:trees}, with or without labels, except that for every node there is a linear order on its children.
 \end{enumerate} 
 \end{minipage}
\end{center}
These equivalences are proved in  Appendix~\ref{sec:proof-of-figure}. The most interesting one, by far, is the equivalence of laminar hypergraphs with trees, and its proof uses modulo counting in a non-trivial way. We use the name ``trees'' for the left column, and ``ordered trees'' or ``binary trees'' for the right column. The two columns are equivalent under the surjective transduction order, and thus this example witnesses that the \mso encoding order is strictly finer.

The classification in this example is not exhaustive, nor is it meant to be. For example, one can bound the depth of trees. Also, for unbounded depth trees there are other possibilities, e.g.~possibly disconnected graphs of degree at most three are strictly between the two columns from the above example. We do not attempt to give a complete description of all classes of structures with respect to \mso encodings. This is in contrast to the coarser hierarchy of Blumensath and Courcelle, where a complete description is conjectured~\cite[Section 9]{lmcs:1208}.
\end{myexample}

\section{Recognisability}
\label{sec:recognizability}
In this section, we phrase recognisability in terms of \mso transductions.

Traditionally, recognizability is defined in terms of algebras. The idea is that for each kind of object (such as words, multisets, trees, graphs, etc. ) one defines an algebraic structure (such as semigroups, commutative semigroups, tree algebras, various graph algebras). Once the algebraic structure has been defined, one calls a language recognizable if it is recognized by a congruence in the algebra that has finitely many equivalence classes. A brief discussion of algebraic recognizability is found in Section~\ref{sec:algebras}, and a more detailed one in~\cite{bojanczyk_recobook}. 

The algebraic notion of recognizability works well for structures such as strings or trees. However, with more complicated structures, such as graphs, algebras become cumbersome, with annotation such as ports, and extra sorts devoted to book-keeping. To avoid such bookkeeping, we propose the following definition. In the definition, a dotted arrow represents a transduction that is not necessarily an \mso transduction, i.e.~it a binary relation closed under isomorphism of structures.

\begin{definition}\label{def:recognizability}
 Let $\Cc$ be a class of structures and let $L \subseteq \Cc$
be a language that is not necessarily definable in \mso. We say that $L$ is \emph{recognizable} if the composition 
\[
 \begin{tikzcd}
 \text{trees} 
 \ar[r,"f"]
 &
 \Cc 
 \ar[r,"L",dotted]
 &
 \Bool
 \end{tikzcd}
 \]
 is an \mso transduction for every $f$ that is an \mso transduction.
\end{definition}

Call an \mso transduction \emph{deterministic} if it does not use colouring; such a transduction is a partial function. 
In the definition above, $f$ ranges over \mso transductions that need not be deterministic. Also the domain is trees which are unordered and not necessarily binary. In Appendix~\ref{sec:ordered-footnote}, we show that these choices are non-essential, for example the same notion would arise if $f$ would range over deterministic total \mso transductions from ordered binary trees.


 

\begin{figure*}
 \begin{tabular}{l|l}
 structures & notion of recognizability that is equivalent\\ & to Definition~\ref{def:recognizability}\\
 \hline
 strings & word automata or, equivalently, semigroups \\
 trees & tree automata or various algebras for trees~\cite{DBLP:books/ems/21/Bojanczyk21} \\ 
 Mazurkiewicz traces & monoid for trance languages~\cite[Section 6.3]{diekert1995book}\\
 graphs of bounded treewidth & \hr-recognisability~\cite[Section 2.3]{engelfrietMSODefinableString2001}\\
 graphs of bounded cliquewidth & \vr-recognisability~\cite[Section 2.5]{engelfrietMSODefinableString2001} \\
 representable matroids of & parse trees~\cite[Section 3]{hlineny06} \\ \quad bounded branchwidth \\
 \end{tabular}
 \caption{\label{fig:other-recognizabilities} Classes of structures for which Definition~\ref{def:recognizability} coincides with previously defined notions.}
\end{figure*}

The above definition is simple, and yet it coincides with the usual algebraic notions of recognizability in all known cases, including objects such as strings, trees, traces, various kinds of graphs, and matroids, see Figure~\ref{fig:other-recognizabilities}. This is discussed in more detail in Section~\ref{sec:algebras}.

An immediate corollary of the definition is that if a language is definable in \mso, then it is recognizable. Indeed, if a language is an \mso transduction (i.e.~a solid arrow, not a dotted one), then pre-composing it with any \mso transduction $f$ will necessarily give an \mso transduction, as a composition of two \mso transductions. For example, this subsumes Courcelle's Theorem~\cite[Theorem 4.4]{courcelleMonadicSecondorderLogic1990}, which says that for graph languages (either in incidence or edge representation), \mso definability implies recognisability\footnote{To be fair, the proof of Courcelle's theorem is simply transferred to showing that Courcelle's notion of recognizability coincides with our notion of recognizability, see Theorem~\ref{thm:algebraic-vs-transducer-recognizability}, which is arguably a better place.}. 

For some classes of structures, recognizability is a strictly weaker notion than \mso definability. The following theorem gives two conditions for such classes, one necessary and one sufficient. Later in this section, we state a conjecture which implies that the theorem gives a dichotomy.

\begin{theorem}\label{thm:when-reco}
 Let $\Cc$ be a class of structures.
 \begin{enumerate}
 \item \label{it:sufficient-non-definable} If $\Cc$ has unbounded \mso width, then some recognizable language $L \subseteq \Cc$ is not definable in \mso. 
 \item \label{it:sufficient-definable} If $\Cc$ admits an \mso encoding in trees, then every recognizable language $L \subseteq \Cc$ is definable in \mso.
 \end{enumerate} \end{theorem}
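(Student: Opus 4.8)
\textbf{Proof plan for Theorem~\ref{thm:when-reco}.}

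\medskip

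The two parts are essentially independent, so I would treat them separately.

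\medskip

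For part~\ref{it:sufficient-definable}, the idea is that an \mso encoding in trees reduces the problem to the classical fact that recognizability equals \mso definability for trees. Let $e : \Cc \to \text{trees}$ be the encoding, with decoding $d : \text{trees} \to \Cc$ satisfying $d \circ e = \mathrm{id}_\Cc$. Given a recognizable language $L \subseteq \Cc$, I want to produce an \mso formula defining it. The plan is to pull $L$ back along the decoding: consider the composite $\text{trees} \xrightarrow{d} \Cc \xrightarrow{L} \Bool$. Since $d$ is an \mso transduction and $L$ is recognizable, Definition~\ref{def:recognizability} tells us this composite $L \circ d$ is an \mso transduction from trees to $\Bool$, i.e.\ (by Example~\ref{ex:language-as-transduction}) an \mso definable tree language $K = d^{-1}(L) \subseteq \text{trees}$. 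Now I claim $L = e^{-1}(K)$, which would exhibit $L$ as the preimage of an \mso definable language under the \mso transduction $e$, hence \mso definable. The verification of $L = e^{-1}(K)$ is where I must be careful about nondeterminism: because $e$ and $d$ are relations, I should check that for $C \in \Cc$, membership $C \in L$ holds if and only if \emph{some} (equivalently, every) output $T$ of $e$ on $C$ lies in $K$. The identity $d \circ e = \mathrm{id}_\Cc$ guarantees that every such $T$ satisfies $d(T) \ni C$, so $T \in K = d^{-1}(L)$ is tied to $C \in L$; I would spell out this equivalence using that preimages along \mso transductions preserve \mso definability (a solid arrow pre-composed with $e$ stays solid). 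This direction is routine once the pull-back square is set up correctly.

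\medskip

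For part~\ref{it:sufficient-non-definable}, I need to \emph{construct} a recognizable but non-definable language witnessing the failure. The natural strategy is a counting or diagonalization argument. If $\Cc$ has unbounded \mso width, then by Definition~\ref{def:bounded-mso-width} there is \emph{no} surjective \mso transduction from any subclass of trees onto $\Cc$. I would exploit this to find a language $L$ that defeats every \mso formula but survives recognizability. The key tension to exploit is the growth-rate phenomenon used in Example~\ref{ex:examples-of-encodings}: \mso transductions increase the universe at most linearly, and \mso definable languages are, in a suitable sense, ``few'' relative to the class; but recognizability only constrains the pullbacks along tree transductions. Concretely, the plan is: since recognizability only requires that $L \circ f$ be \mso for each tree-to-$\Cc$ transduction $f$, and since no such $f$ is surjective (by unbounded width), the image of each individual $f$ misses part of $\Cc$; one can then design $L$ so that on each $\mathrm{image}(f)$ it agrees with some \mso definable tree language (making $L \circ f$ definable and hence $L$ recognizable), while globally $L$ cannot be \mso definable on $\Cc$ itself. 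Making ``$L$ agrees with an \mso language on every tree-image but is not globally definable'' precise is the crux: I expect to pick $L$ to encode a non-\mso property (for instance a counting or cardinality property that is not uniformly \mso definable on $\Cc$ because $\Cc$'s structures are too complex to parse by trees) and then verify that along any bounded-width slice it collapses to something definable.

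\medskip

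\textbf{Main obstacle.} The hard part is part~\ref{it:sufficient-non-definable}, specifically producing a single language $L$ that is simultaneously recognizable (definable after \emph{every} tree-transduction pullback) yet not \mso definable. The challenge is that recognizability is a strong universally-quantified condition over all \mso transductions $f$, so the witness $L$ must be chosen with a uniform description that nonetheless escapes global definability; I anticipate that the right tool is to leverage unbounded width to show that $\Cc$ contains an ``unparseable'' family, and then use a recognizability-preserving but definability-breaking operation (analogous to the known graph examples for unbounded treewidth/cliquewidth). Tying the abstract unbounded-\mso-width hypothesis to a concrete non-definable-yet-recognizable witness, rather than invoking a specific combinatorial structure, is the step I expect to require the most care.
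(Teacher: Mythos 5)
Your treatment of item~\ref{it:sufficient-definable} is correct and is essentially the paper's own proof: since encoding followed by decoding is the identity on $\Cc$, associativity of relational composition gives $L = \text{encode};(\text{decode};L)$; recognizability of $L$ makes $\text{decode};L$ an \mso definable tree language, and pre-composing with the \mso transduction $\text{encode}$ keeps everything an \mso transduction, so $L$ is \mso definable. The nondeterminism bookkeeping you set up around $L = e^{-1}(K)$ is subsumed by this one-line composition argument, so this part is fine.

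Item~\ref{it:sufficient-non-definable} has a genuine gap. You correctly isolate the tension (each individual tree-to-$\Cc$ transduction fails to be surjective), but your concrete plan --- choose $L$ to encode a non-\mso ``counting or cardinality property'' that collapses to something definable on every tree image --- is never carried out, and it is unclear how it could be for an \emph{arbitrary} abstract class $\Cc$: you have no combinatorial handle on the structures in $\Cc$ from which to build such a property. The missing idea, which is what the paper does, is that no structural property is needed at all; a diagonalization combined with counting languages suffices. Enumerate the tree-to-$\Cc$ \mso transductions $f_1, f_2, \ldots$ (countably many, since each has finite syntax). Any finite set of them can be merged into a single transduction, so unbounded \mso width yields, for every $n$, a structure $A_n \in \Cc$ outside the images of $f_1, \ldots, f_n$. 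Then \emph{every} language $L \subseteq \{A_1, A_2, \ldots\}$ is recognizable: the image of each $f_i$ meets $L$ in a subset of $\{A_1, \ldots, A_{i-1}\}$, a finite set of finite structures, and the preimage under $f_i$ of finitely many fixed finite structures can be hard-coded into an \mso formula. Since every structure of $\Cc$ lies in the image of some (e.g.~constant) transduction, the sequence $A_1, A_2, \ldots$ contains infinitely many distinct structures, so there are uncountably many choices of $L$, while only countably many languages are \mso definable; hence some choice is recognizable but not definable. Note that this construction meets your requirement that ``$L$ agrees with an \mso definable language on each image'' in the strongest trivial way --- each such restriction is finite --- which is exactly the uniform description your proposal was searching for but did not supply.
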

 \begin{proof}
  The first item is proved  using a diagonalization argument.  Suppose that $\Cc$ has unbounded \mso width, and consider some enumeration $f_1,f_2,\ldots$ of all \mso transductions from trees to $\Cc$. (Such an enumeration is possible, since an \mso transduction has finite syntax.) For every $n$, choose some structure $A_n \in C$ that is not in the image of the first $n$ transductions in the enumeration. Such a structure must exist, because otherwise, we could combine the first $n$ transductions into a single one that is surjective, contradicting the assumption. Every language that contains only structures from the set $\set{A_1,A_2,\ldots}$ 
   is recognizable for trivial reasons, since the image of every \mso transduction from trees to $\Cc$ will contain only finitely many structures from the language, and these structures can be hard-coded into a recognizing formula (the hard-coding is possible thanks to the assumption that all structures under consideration are finite). However, a subset of $\set{A_1,A_2,\ldots}$ can be chosen in uncountably many ways, and thus one of these ways will not be definable in \mso, since there are countably many \mso formulas. To prove that there are uncountably many choices for the subset, we need to observe that the sequence $A_1,A_2,\ldots$ contains infinitely many different structures; this is true since every structure from $\Cc$ is in the image of some \mso transduction.

 For the second item, the proof is in the following diagram.

 \[
 \begin{tikzcd}
 [column sep=1.5cm]
 \Cc 
 \ar[r,"\text{encode}"]
 &
 \trees 
 \ar[r,"\text{decode}"]
 \ar[rr,"\text{an \mso transduction by assumption on recognizability}"',bend right=15]
 &
 \Cc 
 \ar[r,"L", dotted]
 &
 \Bool.
 \end{tikzcd}
 \]
\end{proof}

 An example of a class as in the first item of the theorem is the class of all graphs, under either edge or incidence representation. Example classes as in the second item of the theorem include the tree classes from Example~\ref{ex:two-kinds-of-trees}. 
 We believe that the assumption in the second item deserves a name:

\begin{definition}[Definable tree decompositions]\label{def:definable-tree-decompositions}
 We say that a class of structures $\Cc$ has \emph{definable tree decompositions} if it admits an \mso encoding in trees.
\end{definition}

The reason for the name is that the image of the encoding in trees can be viewed as an abstract version of a tree decomposition, with the decoding being the semantics of the tree decomposition.

The assumptions in the two items from the above theorem relate $\Cc$ to the class of trees, but using two different orders: the first item uses the surjective transduction ordering of Blumensath and Courcelle, while the second item uses the \mso encoding order. We conjecture that the two conditions are equivalent, despite the orders being different in general.

\begin{conjecture}\label{conjecture:width}
 A class of structures has bounded \mso width iff it has definable tree decompositions. 
\end{conjecture}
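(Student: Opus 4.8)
The easy direction, \emph{definable tree decompositions $\Rightarrow$ bounded \mso width}, is immediate: if $e : \Cc \to \trees$ is an encoding with decoding $d : \trees \to \Cc$ satisfying $d \circ e = \mathrm{id}_\Cc$, then $d$ restricted to the image of $e$ is a surjective \mso transduction from a subclass of trees onto $\Cc$, which is exactly the witness for bounded \mso width demanded by Definition~\ref{def:bounded-mso-width}. So the entire content of the conjecture lies in the converse, and that is where I would concentrate.

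For the converse, suppose $f : \trees \to \Cc$ is a surjective \mso transduction witnessing bounded width; by the remark following Definition~\ref{def:bounded-mso-width} I may assume $f$ is deterministic and total. The plan is to reuse $f$ itself as the decoding and to build the encoding $e : \Cc \to \trees$ as an \mso-definable right inverse, so that for each $A \in \Cc$ the transduction $e$ outputs some tree $t$ with $f(t) \cong A$. In other words, the task is to \emph{invert the surjection $f$ inside \mso}, selecting one tree decomposition per structure, uniformly and definably. Since every class \mso-encodes into hypergraphs and both properties in the statement are invariant under \mso encodings, it suffices to treat subclasses of hypergraphs; there, by Theorem~\ref{thm:hypergraph-rankwidth}, bounded \mso width is the same as bounded hyper-rankwidth. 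The task then becomes concrete: produce an \mso transduction that reads a hypergraph of rankwidth at most $k$ and emits a width-$k$ rank decomposition. Crucially, the defining rank conditions are ranks of $0/1$ matrices over the two-element field, which are \mso-expressible \emph{precisely because we allow modulo counting}; hence the class of valid width-$k$ decompositions of a given hypergraph is itself \mso-definable, and the whole difficulty reduces to whether \mso can definably \emph{choose} one of them.

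The template I would follow is the theorem that graphs of bounded treewidth admit \mso-definable tree decompositions (Boja\'nczyk and Pilipczuk), which already settles the incidence-representation case of the conjecture. Its mechanism is a two-step programme: first show that a graph of treewidth at most $k$ possesses a \emph{canonical}, isomorphism-invariant tree decomposition (through the theory of lean decompositions and guarded separators), and then observe that this canonical object is \mso-definable, since \mso can guess a bounded amount of pinning-down data by colouring and verify correctness locally. I would attempt to run the same two steps for rankwidth: (i) develop a canonical, automorphism-respecting rank decomposition of bounded width, and (ii) express it in \mso, using modulo counting to evaluate the rank separations.

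The main obstacle is step (i) for rankwidth, which is exactly where the treewidth proof has no ready analogue. Treewidth enjoys a mature theory of canonical decompositions (lean trees, tangles, the Robertson--Seymour framework), whereas for rankwidth one would need a comparably canonical, tangle-type decomposition whose separators and pieces are invariant enough to be guessed with finitely many colours and then verified in \mso. Producing such a canonical rank decomposition --- and doing so robustly enough to survive an \emph{arbitrary} \mso transduction $f$ rather than a single fixed combinatorial width measure --- is the crux. This is, I expect, the reason the statement is posed as a conjecture rather than a theorem: already the cliquewidth/rankwidth instance appears to be open, so the plan above describes a route rather than a completed argument.
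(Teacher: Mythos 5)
Your treatment coincides with the paper's: the statement is a conjecture, and the paper establishes only the right-to-left implication (the decoding corresponding to an encoding is a witness of bounded width), exactly as you do, and likewise observes that the property is downward closed under \mso encodings so it suffices to treat subclasses of hypergraphs, that the incidence-representation case is known (Boja\'nczyk--Pilipczuk), and that the edge-representation/cliquewidth case is already open. You correctly present the left-to-right direction as a programme rather than a proof, which is precisely the paper's own position.
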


The right-to-left implication in the conjecture is immediate: the decoding corresponding to an encoding is a witness of bounded width. It is the left-to-right implication that is the subject of the conjecture. The conjecture is known to be true for every subclass of the class of graphs under incidence representation~\cite[Theorem 2.4]{bojanczykDefinabilityEqualsRecognizability2016a}. For graphs under edge representation, only the special linear case of the conjecture is known: if a class of graphs under edge representation has bounded linear \mso width, then it has definable tree decompositions~\cite[Theorem 3.3]{linearcliquewidth2021}. The conjecture remains open already for graphs under edge representation with bounded (not necessarily linear) \mso width, not to mention $k$-ary relations and hypergraphs. 

 
One can easily see that the conjecture is downward closed under \mso encodings. Therefore, it would be enough to prove that the conjecture is true for all subclasses of the class of hypergraphs, since all classes encode in hypergraphs. 

\begin{proof}[Proof of first three lines in Figure~\ref{fig:other-recognizabilities}]
  Let us now use the notion of definable tree decompositions to prove the first three lines in Figure~\ref{fig:other-recognizabilities}, which say  that our notion of recognizability from Definition~\ref{def:recognizability} coincides with previously defined notions for strings, trees and Mazurkiewicz traces. The remaining lines in the figure, which concern graphs and matroids, will be treated later in the paper, with  Theorems~\ref{thm:algebraic-vs-transducer-recognizability} dealing with graphs, and Appendix~\ref{ap:branchwidth-mso-width} dealing with matroids.

  Let us begin with the case of strings. To prove that the usual notion of recognizability for strings, say in terms of automata, is equivalent to the notion of recognizability from Definition~\ref{def:recognizability}, we use two facts: the usual notion of recognizability for strings coincides with \mso definability, and strings have (trivially) definable tree decompositions. Here is a picture:
\[
\begin{tikzcd}
\text{recognisability by automata}
\ar[d,Leftrightarrow,"\text{classical result in language theory}"] \\
\text{definability in \mso}
\ar[d,Rightarrow,"\text{true for every class of structures}", shift left=2] \\
\text{recognisability as in Definition~\ref{def:recognizability} }
\ar[u,Rightarrow, shift left=2, "\text{Theorem~\ref{thm:when-reco}, item~\ref{it:sufficient-definable}}"]
\end{tikzcd}
\]
The same argument works for trees, and for Mazurkiewicz traces.
\end{proof}

\subsection{Algebras}
\label{sec:algebras}

In this section, we connect our notion of recognisability from Definition~\ref{def:recognizability} to the algebraic notion of recognisability, as described in Figure~\ref{fig:other-recognizabilities}. The first three rows in the figure -- concerning strings, trees, and traces -- are straightforward. We concentrate on the setting of graphs, with matroids discussed in the next section.

We begin with a more formal discussion of algebras. 
The algebras that we use are multisorted, with possibly infinitely many sorts. 
An \emph{algebra} is defined to be a family of \emph{sorts} (each sort represents a class of objects, e.g.~graphs), together with a set of operations of the form 
\begin{align*}
f : \myunderbrace{A_{1} \times \cdots \times A_{n}}{input sorts} \to \myunderbrace{A}{output sort}. 
\end{align*}
The number $n$ of arguments in an operation can be zero, in which case the operation is called a \emph{constant}. We write $\alga, \algb$ for algebras. A \emph{term} in an algebra is defined to be an ordered tree, with nodes labelled by algebra operations, which is consistent with the sorts in the algebra in the following way: for every node in the term that is labelled by an operation $f$, the number of children is the number of arguments in $f$, and the output sorts of the operations in the children (listed from left to right) match the input sorts of the operation $f$. The \emph{value} of a term is an element of the algebra that is defined in the natural way; this value belongs to the \emph{sort} of the term, which is the output sort of the root operation. 
Terms over an algebra can be seen as labelled ordered trees; if the set of operations in the algebra is finite then the set of labels is finite, and so is the degree of the trees. A \emph{subalgebra} of an algebra is defined to be any algebra that can be obtained from the original algebra by removing some sorts, elements and operations. An algebra is called \emph{finitely generated} if it there is a finite subset of its operations, including constants, such that every element of the algebra is a value of some term using these operations. A finitely generated algebra will have finitely many sorts, namely the output sorts of the generating operations. A \emph{congruence} in an algebra is a family of equivalence relations, one for each sort, which respects the operations of the algebra in the usual sense. A subset of an algebra $\alga$ is called \emph{$\alga$-recognizable} if there exists a congruence which has finitely many equivalence classes on each sort, and such that the subset is a union of equivalence classes of the congruence. 

\subparagraph*{Algebras for treewidth and cliquewidth.}
In his algebraic approach to tree-width and cliquewidth, Courcelle defines two algebras. The first one is called \hr-algebra after hyperedge replacement~\cite[Section 2.3]{courcelleGraphStructureMonadic2012}, and it corresponds to treewidth in the following sense: (i) one of the sorts is the class of graphs under incidence representation; (ii)
 a class of graphs has bounded treewidth if and only if it is contained in some finitely generated subalgebra. Recognizability for this algebra is called \emph{\hr-recognizability}. 
There is another algebra for cliquewidth, called the \emph{\vr-algebra} after vertex replacement~\cite[Section 2.5]{courcelleGraphStructureMonadic2012}. In this algebra, (i) one of the sorts is the class of graphs under edge representation; and (ii) a class of graphs has bounded cliquewidth if and only if it is contained in some finitely generated subalgebra. Recognizability for this algebra is called \emph{\vr-recognizability}. For more details on these algebras, see~\cite{courcelleGraphStructureMonadic2012}.

We now formally relate the algebraic notions recognizability with our notion of recognizability from Definition~\ref{def:recognizability}, explaining the meaning of the two rows in Figure~\ref{fig:other-recognizabilities} that concern graphs. Our notion corresponds to a weakening of the algebraic notion of recognizability: a subset $L$ of an algebra $\alga$ is called \emph{weakly $\alga$-recognizable} if $L \cap \algb$ is $\algb$-recognizable for every finitely generated subalgebra $\algb \subseteq \alga$. Clearly , $\alga$-recognizability implies weak $\alga$-recognisability, and the two notions coincide for finitely generated algebras. For algebras that are not finitely generated, such as the \hr- and \vr-algebras, the two notions are not equivalent, as explained in the following example.

\begin{myexample} \label{ex:weak-weaker} 
 Consider the graph language which contains, for every $n \in \set{1,2,\ldots}$, the graph that is obtained by taking the disjoint union of two copies of the $n \times n$ grid. 
  Since grids have unbounded treewidth, every finitely generated subalgebra of the \hr-algebra will contain finitely many graphs from this language. Therefore, the language is weakly \hr-recognizable. However, the language is not \hr-recognizable, since taking the disjoint union of two graphs is an operation in the \hr-algebra, and thus any congruence in the \hr-algebra would need to remember the size of a square grid in order to recognize the language. The same example applies to the \vr-algebra.
\end{myexample}

For algebras that are not finitely generated, 
it can be debated which version of $\alga$-recognizability, weak or strong, is more useful. For finitely generated algebras, these notions coincide. Since most of algebraic language theory studies languages in finitely generated algebras, such as graph languages with bounded treewidth or cliquewidth, the difference between the two versions seems to be minor.
The following theorem shows that our notion of recognisability corresponds to weak recognizability in the algebras for treewidth and cliquewidth.

\begin{theorem}\label{thm:algebraic-vs-transducer-recognizability}
 For every class of graphs $L$ under incidence representations, recognizability in the sense of Definition~\ref{def:recognizability} is equivalent to weak \hr-recognizability. Likewise for edge representations and weak \vr-recognizability. 
\end{theorem}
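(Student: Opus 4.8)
The plan is to prove both implications separately, treating the incidence/\hr case in detail; the edge/\vr case is identical after replacing treewidth by cliquewidth, the \hr-algebra by the \vr-algebra, and the first item of Theorem~\ref{thm:courcelle-engelfriet-width-logical} by the second. Throughout I write $\alga$ for the \hr-algebra, whose graph sort is all graphs under incidence representation, so that $L$ is a subset of this sort. The bridge between the two notions is the \emph{evaluation transduction} $\mathrm{eval}$ of a finitely generated subalgebra $\algb \subseteq \alga$: since $\algb$ has finitely many operations, its terms are ordered trees over a finite alphabet, and the map sending a term to its value is a deterministic total \mso transduction from trees to graphs, by the standard fact that the \hr-operations have \mso-definable semantics. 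By Appendix~\ref{sec:ordered-footnote} I may freely use ordered finite-degree trees as the tree inputs in Definition~\ref{def:recognizability}, so $\mathrm{eval}$ is an admissible choice of $f$.

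For the implication from Definition~\ref{def:recognizability} to weak \hr-recognizability, I fix a finitely generated $\algb$ and apply the hypothesis to $f = \mathrm{eval}$. Since the image of $\mathrm{eval}$ is exactly $\algb$, we have $\mathrm{eval}^{-1}(L) = \mathrm{eval}^{-1}(L \cap \algb) =: K$, and recognizability gives that $K$ is an \mso-definable, hence recognizable, tree language. I then run a Myhill--Nerode argument: define $a \approx b$ on each sort of $\algb$ to hold iff $\mathrm{eval}_C(a) \in L \Leftrightarrow \mathrm{eval}_C(b) \in L$ for every term-context $C$ with a hole of that sort. This $\approx$ is a congruence recognizing $L \cap \algb$ (apply the empty context), and the identity $t \equiv t' \Leftrightarrow \mathrm{eval}(t) \approx \mathrm{eval}(t')$ between the Myhill--Nerode congruence $\equiv$ on terms and $\approx$ shows, using surjectivity of $\mathrm{eval}$, that $\approx$ has at most as many classes per sort as the finite Myhill--Nerode congruence of the recognizable language $K$. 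Hence $L \cap \algb$ is $\algb$-recognizable.

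For the converse, let $f \colon \trees \to \Cc$ be any \mso transduction, where $\Cc$ is all graphs under incidence representation. By Theorem~\ref{thm:courcelle-engelfriet-width-logical} its image has bounded treewidth, hence is contained in some finitely generated $\algb \subseteq \alga$; then $f^{-1}(L) = f^{-1}(L \cap \algb)$. The key technical step is to produce an \mso transduction $h \colon \trees \to (\text{terms over } \algb)$ such that $h(t)$ is a term whose value is $f(t)$, matching the nondeterministic outputs. Granting such an $h$, the recognizing congruence of $L \cap \algb$ (finite index, by weak recognizability) makes $K := \mathrm{eval}^{-1}(L \cap \algb)$ a recognizable tree language, and $f^{-1}(L) = h^{-1}(K)$. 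Since recognizable tree languages are closed under inverse \mso transductions — a language is a transduction into $\Bool$, and \mso transductions compose — I conclude that $f^{-1}(L)$ is \mso-definable, i.e.\ recognizable in the sense of Definition~\ref{def:recognizability}.

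The main obstacle is constructing $h$, i.e.\ computing an \hr-parse of the output graph \mso-definably from the input tree. This is exactly the machinery behind Theorem~\ref{thm:courcelle-engelfriet-width-logical}: because $f$ is an \mso transduction from trees, the universe of $f(t)$ is contained in a fixed number of copies of the nodes of $t$ with \mso-definable incidences, so $t$ itself already carries a tree decomposition of $f(t)$ of bounded width; converting a bounded-width tree decomposition into an \hr-term over a fixed finite signature is a further \mso transduction, again due to Courcelle and Engelfriet. Composing the two gives $h$. I expect this definable-parsing step to be where the real work lies, with the remaining steps — the Myhill--Nerode correspondence and the closure of recognizable tree languages under inverse \mso transductions — being routine.
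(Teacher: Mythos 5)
Your proof is correct and takes essentially the same route as the paper: the paper packages your two ingredients (evaluation of \hr/\vr-terms is an \mso transduction, and every \mso transduction from trees factors through terms over a finite set of operations) as the two assumptions of its Lemma~\ref{lem:technical-two-recognizabilities}, reduces to deterministic ordered-binary-tree inputs via Appendix~\ref{sec:ordered-footnote} exactly as you do, and likewise delegates the definable-parsing step to the machinery of Courcelle and Engelfriet~\cite{courcelle1995logical}. Your explicit Myhill--Nerode transfer of the syntactic congruence on terms to a congruence on the algebra is merely a more detailed rendering of a step the paper states tersely, not a different argument.
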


\begin{proof}  We use the following lemma.

\begin{lemma}\label{lem:technical-two-recognizabilities}
 Let $\alga$ be an algebra where every sort is a class of structures, such that:
 \begin{enumerate}
 \item \label{assumption:mso-definable} for every sort $\Cc$ and every finite set $\Sigma$ of operations in the algebra, the function of type 
 \begin{align*}
 \text{terms over $\Sigma$ of sort $\Cc$} \to \Cc,
 \end{align*}
 which maps a term to its value, is an \mso transduction;
 \item \label{assumption:factors-through-terms} for every sort $\Cc$, every deterministic \mso transduction 
 \begin{align*}
 f : \text{ordered binary trees} \to \Cc
 \end{align*}
 can be factored through a function as in the previous item, i.e.~there is a finite set of operations $\Sigma$ and an \mso transduction $g$ which makes the following diagram commute 
 \[
 \begin{tikzcd}
 \text{ordered binary trees} 
 \ar[r,"f"] 
 \ar[d,"g"]
 &
 \Cc \\
 \text{terms over $\Sigma$ with sort $\Cc$} 
 \ar[ur,"\text{value}"']
 \end{tikzcd}
 \]
 
 \end{enumerate}
 Then for every sort $\Cc$ of the algebra, a subset $L \subseteq \Cc$ is recognizable in the sense of Definition~\ref{def:recognizability} if and only if it is weakly $\alga$-recognizable.
\end{lemma}
 \begin{proof}
 The assumptions are chosen so that the proof is a trivial check. 
 
 Let us first show that recognizability implies weak $\alga$-recognizability. Suppose that $L$ is recognizable. To show that $L$ is weakly $\alga$-recognizable, consider some finitely generated algebra $\algb \subseteq \alga$, which is generated by operations $\Sigma$. Consider the following diagram 
 \[
 \begin{tikzcd}
 \text{terms over $\Sigma$ of sort $A$} 
 \ar[r,"\text{value}"]
 &
 \Cc
 \ar[r,"L",dotted]
 &
 \Bool
 \end{tikzcd} 
 \]
 By the assumption that the algebra satisfies condition~\ref{assumption:mso-definable}, the first function is an \mso transduction, and by the assumption on recognizability, the composition of both functions is an \mso transduction. In other words, the set of terms over $\Sigma$ that belong to the language is a regular tree language. Every regular tree language has a corresponding congruence of finite index, which proves that $L \cap \algb$ restricted to the subalgebra is $\algb$-recognizable.

 We now show that weak $\alga$-recognizability implies recognizability. Suppose that $L$ is weakly $\alga$-recognizable. To show that $L$ is recognizable, consider some 
 \mso transduction 
 \begin{align*}
 f : \text{ordered binary trees} \to \Cc.
 \end{align*}
 (We can use ordered binary trees thanks to the results in Section~\ref{sec:ordered-footnote}, and the fact that ordered binary trees and binary trees are equivalent under \mso encodings.)
 The proof is explained in the following diagram.
 \[
 \begin{tikzcd}
 [column sep=2cm]
 \text{ordered binary trees} 
 \ar[dr,"f"] 
 \ar[d,"g"']
 \\
 \text{terms over $\Sigma$ with sort $\Cc$} 
 \ar[d,"\text{equivalence class of $\equiv$}"']
 \ar[r,"\text{value}"']
 &
 A 
 \ar[d,"L", dotted]
 \\
 \text{(terms over $\Sigma$ with sort $\Cc$)}_{/\equiv} 
 \ar[r,"F"']
 & 
 \Bool
 \end{tikzcd}
 \]

 The upper triangular face in the diagram is the result of applying condition~\ref{assumption:factors-through-terms}. The lower rectangular face is the result of applying the assumption on weak $\alga$-recognisability, which yields a congruence $\equiv$ with finitely many equivalence classes and a function $F$ that decides membership in the language based only on the equivalence class. Since the congruence has finitely many equivalence classes, the equivalence class can be computed by an \mso transduction, and therefore the path down-down-right describes an \mso transduction. 
 \end{proof}

 The algebras used to define recognizability for graphs, namely the \hr-algebra and the \vr-algebra, satisfy the assumptions of the above lemma. Proving this is the essence of the proof of Theorems 2.6 and 3.1 in~\cite{courcelle1995logical}, and is thus not repeated here (the same argument for recognizable matroids is spelled out later in this appendix). Therefore, we can apply the lemma to conclude that Definition~\ref{def:recognizability} coincides with weak recognizability in these algebras, as stated in the theorem. \end{proof}
 
\section{Matroids}
\label{sec:matroids}

We end this paper with an extended example about matroids. 

We begin with matroids that can be represented over a finite field.
A \emph{matroid} represented over a field consists of a set of \emph{elements}, together with a \emph{vector representation}, which is a function from elements to some vector space over the field. A matroid can be described as a structure in two ways:
\begin{itemize}
\item \emph{The independence representation.} The universe is the elements of the matroid, and there is one relation $independent(X)$
which selects the independent subsets of the universe, i.e.~sets whose representing vectors are linearly independent. (If two elements in the set are represented by the same vector, then the set is dependent.)
\item \emph{The null representation.} As in the independence representation, the universe is the elements of the matroid. The independence relation is replaced with a relation on sets, of arity $n = |\field|-1$, which describes the following property
\begin{align*}
\text{(sum of vectors representing $X_1$)} + \cdots + \text{(sum of vectors representing $X_n$)} = 0
\end{align*}
\end{itemize}
The independence representation is the usual notion of matroid when discussing logic over matroids~\cite[Section 2.5]{hlineny06}. One can define the independence relation in terms of the null relation in \mso, since a set is independent if and only if it does not contain any nonempty null sets. When the field has at least three elements, the nulls store strictly more information than the independent sets: one can find two representable matroids with isomorphic independence representations, but non-isomorphic null representations. The case of the two element field (such matroids are called binary) is somewhat special: one can show that the two representations of binary matroids are classes that are isomorphic in the category of \mso transductions. For a binary matroid, the null representation is the same as a hypergraph where hyperedges are closed under symmetric difference.





\begin{theorem}\label{thm:repr-matroids}
 For every finite field $\field$, the class of matroids over $\field$ under null representation is equivalent, under \mso encodings, to the class of binary relations, and admits an encoding into the class of matroids over $\field$ under incidence representation. 
\end{theorem}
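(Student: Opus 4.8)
My proof rests on the observation that a matroid over $\field$ under null representation is nothing more than a pair $(V,K)$, where $V$ is the universe and $K\subseteq\field^V$ is a linear subspace. Indeed, reading a tuple of sets $(X_1,\dots,X_{|\field|-1})$ as the coefficient vector $c$ that assigns the $\ell$-th nonzero scalar to the elements of $X_\ell$, the null relation holds exactly for those $c$ with $\sum_{e} c_e v_e = 0$, and this is precisely the kernel $K$ of the matrix whose columns are the representing vectors; conversely every subspace of $\field^V$ arises from some representation. So all three assertions are really statements about subspaces of $\field^V$. The recurring tool will be that modulo-counting \mso can carry out linear algebra over the fixed field $\field$: a field-valued sum $\sum_{e\in X} a_e$ with $X$ ranging over an unbounded set is evaluated by grouping the summands by their (finitely many) values and counting each group modulo the characteristic, so predicates such as ``$c\in K$'' or ``$c$ lies in the kernel of a given matrix'' become \mso-definable.

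For the equivalence with binary relations, the encoding of a binary relation $(V,R)$ into a null-matroid is direct: using $2$-copying I take $V'=V\sqcup V$ and let $K\subseteq\field^{V'}$ be the graph of $R$ viewed as a $0/1$-matrix, i.e.\ the set of vectors whose second half is $R$ applied to the first half; the relation formula defining the null relation just asserts this identity, which is the \mso linear algebra above. Decoding is immediate, since $(i,j)\in R$ iff the unique null vector supported on $\{j\}$ in the first copy with coefficient $1$ there carries coefficient $1$ at $i$ in the second copy. The reverse encoding of a null-matroid $(V,K)$ into a binary relation is the more interesting direction, and the key point is that a \emph{matroid} basis is a \emph{subset} of $V$ (a maximal independent set of elements), which colouring is able to guess and filtering to verify. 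Having guessed a basis $B\subseteq V$, every other element $e$ has \mso-definable fundamental-circuit coefficients $c_{e,b}\in\field$ expressing $v_e$ in the basis, and these assemble into a representation matrix $A_B$ with $\ker A_B=K$. I would write $A_B$ into a single binary relation on two copies of $V$ (rows indexed by $B$, columns by $V$), spending a constant number of further copies to hold the $\lceil\log_2|\field|\rceil$ bits of each field entry; the decoding reads the matrix back and defines the null relation as membership of the coefficient vector in $\ker A_B$. Since different guessed bases give matrices with the same kernel, every output decodes to the same matroid, so this is a genuine encoding.

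For the encoding of a null-matroid into the independence representation (called \emph{incidence} in the statement), the difficulty is real: when $|\field|\ge 3$ the independence relation forgets the coefficients of circuits, so the naive map that simply keeps the elements has no decoding. I would rigidify the representation by a gadget of $O(|V|)$ extra elements. Starting from a guessed-and-verified matroid basis $B$, I add a unit reference point together with, for each element, one scale-witness placed at the concrete vector $v_e+a_0$ for a single fixed reference $a_0$. The basis and the unit point form a projective frame, which makes the projective coordinates of every $v_e$ \mso-readable from boundedly many incidences with the frame flats (the coordinatization of projective space by a frame, with flats definable from the independence relation); the scale-witnesses then pin the absolute scalings --- the very information that the null relation carries over and above independence --- because the decoder knows by construction that each witness is $v_e+a_0$ with both coefficients equal to $1$. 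This determines the representation up to a single change of basis, hence determines $K$, and $K$ is reconstructed from the recovered coordinates by the same \mso linear algebra.

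The step I expect to be the main obstacle is this last gadget. One must (a) design an $O(|V|)$-element frame from which each of the unboundedly many coordinates of each element is extractable by a \emph{bounded} incidence query, so that the whole coordinate matrix is \mso-definable at once; (b) fix absolute scalings consistently across all elements; and (c) ensure that the three kinds of elements --- original, frame, and witnesses --- are \mso-identifiable from the independence relation alone, since the target vocabulary offers no unary marks. The subspace viewpoint and the modulo-counting linear-algebra lemmas make the two binary-relation directions comparatively routine; there the only fiddly work is packing element types and the bits of field entries into the single relation of the target vocabulary and recovering them in \mso by means of distinguished gadget vertices.
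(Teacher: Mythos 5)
Your first two claims --- the equivalence of null-representation matroids with binary relations --- are sound and essentially the paper's own route: the paper likewise guesses a basis by colouring, extracts the coefficient matrix of every element over that basis, and stores it as a bipartite binary relation, with the decoding recovering the kernel by modulo-counting linear algebra; your ``graph of $M_R$'' kernel in the converse direction is a compact variant of the paper's sums-of-basis-vectors construction, and the type-marking and bit-packing issues you defer to gadgets are indeed routine (the paper handles them with duplicate elements and odd-cycle gadgets).

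The third claim, however, has a genuine gap, and it is not merely the incompleteness you acknowledge in points (a)--(c): the frame-plus-scale-witness gadget provably cannot work over any field with a nontrivial automorphism, for instance the four-element field. Work in coordinates where the guessed basis $B$ is the standard basis, and let $\sigma$ be the Frobenius map $x \mapsto x^2$ applied coordinatewise to every vector of your output configuration. Since $\sigma$ preserves linear dependence, the output independence matroid is unchanged; since your frame (basis plus unit point) and your reference $a_0$ are $0/1$ combinations of basis vectors, they are fixed by $\sigma$; and since each witness is $v_e + a_0$ with coefficients $1$, the witness of $v_e$ is mapped to $\sigma(v_e)+a_0$, which is precisely the witness of the twisted configuration. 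Hence the two input matroids with kernels $K$ and $\sigma(K)$ yield the \emph{same} output independence matroid, while $(V,K)$ and $(V,\sigma(K))$ are in general non-isomorphic null structures: over $\mathbb{F}_4$ take $v_1=(1,0)$, $v_2=(0,1)$, $v_3=(1,\omega)$, so that $K=\mathrm{span}\{(1,\omega,1)\}$ and $\sigma(K)=\mathrm{span}\{(1,\omega^2,1)\}$, which no permutation of the three elements can identify. So no decoding can exist; your obstacle (b) is an impossibility for any such ``geometric'' gadget whose data is built from prime-subfield coefficients, because the independence matroid is invariant under field automorphisms while the null relation is not. The paper evades exactly this by never storing field elements as coordinates: it first converts the coefficient matrix into a plain bipartite graph (field entries become combinatorial vertex gadgets), and only then builds a matroid \emph{all} of whose vectors are $0/1$ --- basis vectors for right vertices, neighbourhood sums for left vertices --- so that the graph, hence the null structure, is recovered from independence alone via minimal dependent sets. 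You already have the ingredients for the same repair: compose your part-two encoding of null matroids into binary relations with such a $0/1$-vector encoding of bipartite graphs into independence matroids, instead of trying to read projective coordinates off a frame.
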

We do not know if the encoding of the null representation into the independence representation can be reversed.
Thanks to the above theorem, as long as we use the null representation, the choice of field is unimportant, as long as we use finite fields and we only care about classes modulo equivalence under \mso encodings. 
A corollary of Theorem~\ref{thm:repr-matroids} is that the equivalence of recognizability and definability in \mso, which was proved in~\cite[Theorem 3.5]{linearcliquewidth2021} for classes of binary relations of bounded linear width, entails the analogous result for representable matroids under null representation, see~\cite[Theorem 4.1]{matroids-definability}. However, the techniques used in~\cite{matroids-definability} seem to be more promising; in a future version of this paper, we plan to show that the results from~\cite{matroids-definability} extend to the independence representation.

We now turn to a discussion of branchwidth, which shows that, similarly to tree-width and cliquewidth, it corresponds to our abstract definition of bounded \mso width, as long as representable matroids are concerned. 
We begin by recalling the definition of branchwidth. In a representable matroid, the \emph{rank} of a set of elements is the rank over the vector space spanned by these elements. Tree decomposition are defined in the same way as for hyper-rankwidth for hypergraphs, see Theorem~\ref{thm:hypergraph-rankwidth}, i.e.~by assigning matroid elements to leaves of a binary tree, except the width of a parition of the matroid into two parts $X_1$ and $X_2$ is assigned the following number, which is called \emph{connectivity}:
\begin{align*}
 rank(X_1) + rank(X_2) - rank(X_1 \cup X_2).
\end{align*}
The resulting notion of width for matroids is called \emph{branchwidth}. The following theorem shows that for representable matroids, bounded branchwidth coincides with bounded \mso width.
\begin{theorem}\label{thm:bounded-branchwidth-for-matroids}
 Let $\Cc$ be a subclass of matroids over some field $\field$, under null or independence representation.
 Then $\Cc$ has bounded \mso width if and only if it has bounded branchwidth. 
\end{theorem}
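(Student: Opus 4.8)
The statement is the matroid analogue of the Courcelle--Engelfriet characterisations (Theorem~\ref{thm:courcelle-engelfriet-width-logical}) and of the hypergraph statement Theorem~\ref{thm:hypergraph-rankwidth}, with the \emph{connectivity} $\lambda(X_1)=rank(X_1)+rank(X_2)-rank(E)$ playing the role that the rank of a partition plays there. Indeed, for a partition $E=X_1\cup X_2$ of a representable matroid one has $\lambda(X_1)=\dim(\langle X_1\rangle\cap\langle X_2\rangle)$, so connectivity measures the dimension of the linear ``interface'' between a part and its complement. I would prove the two implications separately. Regarding representations: branchwidth is intrinsic to the matroid and does not depend on the representation, so the representation only affects the \mso-width side. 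Since the independence relation is \mso-definable from the null relation, there is a (surjective) \mso transduction from the null to the independence representation, and Theorem~\ref{thm:repr-matroids} supplies an encoding the other way; combined with the fact that bounded \mso width is downward closed under the surjective transduction order, this lets me prove the equivalence for the null representation and transfer it to the independence representation. (The same theorem shows the choice of finite field is immaterial on the \mso-width side, which is a useful sanity check but not needed for the argument.)

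\textbf{Bounded branchwidth implies bounded \mso width.} Here I would build a surjective \mso transduction from a subclass of trees to $\Cc$, exactly as in the proof template behind Lemma~\ref{lem:technical-two-recognizabilities}. Fix the branchwidth bound $k$. A branch decomposition of width $\le k$ is a binary tree whose leaves are the elements of the matroid, and every tree edge induces a cut of connectivity $\le k$, i.e.\ an interface space of dimension $\le k$. Over a fixed finite field $\field$, a subspace of dimension $\le k$ can be described by boundedly many vectors, so there are only \emph{finitely many} interface types. Decorating each node of the branch decomposition with this bounded interface data yields a term over a finite signature whose value is the matroid; this is precisely Hliněn\'y's parse-tree / finitely-generated-algebra description of representable matroids of bounded branchwidth~\cite[Section 3]{hlineny06}. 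The resulting value map (decorated binary tree to matroid) is an \mso transduction, and decorated binary trees are equivalent under \mso encodings to the class of trees (Example~\ref{ex:two-kinds-of-trees}), so one obtains a surjective \mso transduction from a subclass of trees to $\Cc$, witnessing bounded \mso width in the sense of Definition~\ref{def:bounded-mso-width}.

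\textbf{Bounded \mso width implies bounded branchwidth.} Suppose there is a surjective \mso transduction $f$ from a subclass of trees to $\Cc$, and let $M=f(T)$. The elements of $M$ arise from boundedly many copies of the nodes of $T$, so the tree shape of $T$ yields a hierarchical partition of the elements of $M$, i.e.\ a candidate branch decomposition. It remains to bound the connectivity of the cut $(E_1,E_2)$ induced by removing any edge $e$ of $T$. The plan is a Feferman--Vaught / Ehrenfeucht--Fra\"iss\'e composition argument over trees: the behaviour of the \mso transduction near $e$ is governed by the bounded-quantifier-rank \mso types of the two subtrees meeting at $e$, of which there are only boundedly many; and these cut-types determine which null (or independence) relations link $E_1$ to $E_2$. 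One then argues that the interface space $\langle E_1\rangle\cap\langle E_2\rangle$ is spanned by boundedly many canonical vectors, one per cut-type, so $\lambda(E_1)$ is bounded by a constant depending only on $f$ and $|\field|$. This is the matroid analogue of the Courcelle--Engelfriet direction showing that \mso transductions from trees produce graphs of bounded cliquewidth~\cite[Theorem 3.1]{courcelle1995logical}.

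\textbf{Main obstacle.} I expect the second implication to be the hard part, and within it the crucial lemma is the passage from the \emph{logical} statement ``boundedly many cut-types'' to the \emph{linear-algebraic} statement ``bounded connectivity''. Connectivity is the dimension of an intersection of spans over $\field$, a global spanning quantity that is not, on its face, \mso-definable or local, whereas the composition argument only controls \mso types. The key step is therefore to show that the boundary space at a tree-induced cut is spanned by a set of vectors whose number is bounded in terms of the cut-types --- i.e.\ that distinct elements on one side of the cut that share the same type contribute no new dimension to the interface. Establishing this rank bound, which fuses the compositional behaviour of \mso over trees with linear algebra over the finite field, is the technical core of the proof.
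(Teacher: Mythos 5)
Your two main implications follow essentially the same route as the paper's own proof. For ``bounded branchwidth implies bounded \mso width'', the paper likewise uses Hlin\v{e}n\'y's parse trees (packaged as a \emph{branchwidth algebra} whose term-to-value maps are shown to be \mso transductions by a tree-automaton argument on null sums). For the converse, the paper likewise reduces to a copy/colour/filter-free interpretation from binary trees with matroid elements at the leaves, bounds the number of cut types across each tree edge by an Ehrenfeucht--Fra\"iss\'e argument (the paper calls this quantity \emph{sensitivity}), and then converts this logical bound into a rank bound via Claim~\ref{claim:connectivity-vs-sensitivity}, which says that sensitivity and connectivity bound each other up to functions depending on $|\field|$ --- exactly the lemma you single out as the technical core. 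The paper routes both implications through Lemma~\ref{lem:technical-two-recognizabilities}, so that the same work also yields the recognizability result for matroids, but the mathematical content is the same as yours.

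The genuine gap is in your handling of the two representations. You assert that Theorem~\ref{thm:repr-matroids} supplies an encoding ``the other way'', i.e.\ from the independence representation to the null representation. It does not: the theorem gives an encoding of the null representation \emph{into} the independence representation, and the paper explicitly states that it is unknown whether this encoding can be reversed. Consequently your transfer breaks in the direction ``bounded \mso width $\Rightarrow$ bounded branchwidth'' for the independence representation: given a surjective \mso transduction from trees to a class $\Cc$ of independence structures, you have no \mso transduction from independence structures to null structures that preserves the underlying matroid (producing a representation from the independence relation is essentially the open problem), so you cannot feed the situation into the null-representation result. Note also that even an abstract encoding in the sense of Definition~\ref{def:mso-encoding} would not be enough: you would additionally need it to map each independence structure to a null structure of the \emph{same} matroid, since otherwise the branchwidth conclusion concerns the wrong matroids. (Your transfer in the opposite direction, using the forgetful surjective transduction from null to independence structures, is fine.) The repair is to drop the transfer and run your type-counting/connectivity argument directly on the independence relation, which is what the paper does: its sensitivity argument and Claim~\ref{claim:connectivity-vs-sensitivity} are applied to both representations. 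A further minor slip: decorated binary trees are \emph{not} equivalent to trees under \mso encodings --- Example~\ref{ex:two-kinds-of-trees} shows binary trees are strictly below trees --- but you only need the encoding of labelled binary trees into trees, which does hold, so this one does not affect the argument.
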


The theorem is proved in Appendix~\ref{ap:branchwidth-mso-width}.
A related concept is linear branchwidth; this is the minimal width of branch decompositions that have the following \emph{linear property}: there is a single root-to-leaf path that contains all nodes that are not leaves. Using the same proof as for Theorem~\ref{thm:bounded-branchwidth-for-matroids}, one can show that a bounded linear branchwidth coincides with bounded linear \mso width.

\subparagraph*{General matroids.}
So far, we have discussed matroids that can be represented using a vector space over some finite field. This is a special case of the general notion of matroids. In general, a matroid~\cite[Section 1.1]{oxley2011matroid} is defined to be a hypergraph (i.e.~a set with a distinguished family of subsets), such that the hyperedges are called \emph{independent sets}, and the following axioms are satisfied: 
\begin{enumerate}
 \item the family of independent sets is nonempty and downward closed under inclusion; and
 \item if $I_1$ and $I_2$ are independent sets, with $|I_1| < |I_2|$, then there is a matroid element $x \in I_2 \setminus I_1$ such that $I_1 \cup \set x$ is an independent set.
\end{enumerate}

The general case of matroids, as defined above, is significantly more general than the representable case. 
For example, the logarithm of the growth rate (see Example~\ref{ex:examples-of-encodings}) for general matroids is $2^{\Theta(n)}$, as proved by Knuth~\cite{knuth1974asymptotic}, which is as large as can be for a class of structures. In particular, a growth rate argument cannot rule out an encoding of hypergraphs in general matroids; we do not know if such an encoding exists. Even for bounded rank, there are many general matroids, as explained in the following theorem.

\begin{theorem}\label{thm:encodings-into-abstract-matroid}
 For every $k$, the class of $k$-ary relations is between the classes of general matroids of ranks $k$ and $2k$, respectively, under the \mso encoding order. 
\end{theorem}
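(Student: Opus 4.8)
The plan is to prove the two encodings separately: first that every class of rank-$k$ matroids encodes into $k$-ary relations, and then that the class of $k$-ary relations encodes into rank-$2k$ matroids. Throughout I use that a matroid is determined by its independence predicate and that, since the rank is bounded by a constant, independence of any set of size at most $2k$ is expressible in \mso (one quantifies the set and writes the bounded independence condition), so that notions such as \emph{circuit-hyperplane}, \emph{parallel class} and \emph{rank of a bounded set} are all available to the interpretations.

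For the easy direction (rank $k$ into $k$-ary relations) I would encode a rank-$k$ matroid $M$ by the $k$-ary relation $R$ on the same universe with $R(v_1,\dots,v_k)$ holding exactly when $\{v_1,\dots,v_k\}$ is independent in $M$; this is an \mso interpretation. Since $M$ has rank $k$ its bases are precisely its independent $k$-element sets, a matroid is determined by its bases, and every independent set is contained in a basis, so the decoding recovers $M$ by declaring a set $Y$ independent iff $Y$ is contained in some $R$-tuple. The only point requiring care is well-typedness of the decoding: on an \emph{arbitrary} $k$-ary relation the formula need not define a matroid. This is handled by the standard filtering trick — the exchange axiom of the defined independence system is a single \mso sentence, so one filters the domain into those inputs that do define a rank-$k$ matroid (where the interpretation above is used) and the rest (where one outputs a fixed default rank-$k$ matroid, e.g.\ the uniform one). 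Encoding followed by decoding is then the identity on rank-$k$ matroids.

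For the hard direction ($k$-ary relations into rank-$2k$ matroids) the strategy is to use \emph{sparse paving} matroids, which are determined by a family $\mathcal D$ of circuit-hyperplanes subject only to the condition that any two members of $\mathcal D$ meet in at most $2k-2$ elements; such a family always yields a valid rank-$2k$ matroid, and $\mathcal D$ is recoverable from the matroid as the set of dependent $2k$-subsets. Given $(V,R)$ with $R\subseteq V^{k}$ I would take $k$ disjoint primary copies $V_1,\dots,V_k$ of $V$ together with a shadow of each primary element, and represent a tuple $t=(u_1,\dots,u_k)$ by the $2k$-set consisting, for each coordinate $i$, of the primary copy $u_i^{(i)}$ and its shadow. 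The point of doubling each coordinate is exactly the intersection bound: two tuples differing in at least one coordinate give $2k$-sets meeting in at most $2(k-1)=2k-2$ elements, so the relation edges alone already form a valid rank-$2k$ sparse paving matroid, and $R$ can be read off them.

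The main obstacle, and the reason the rank is $2k$ rather than something smaller, is reversibility, where there is a genuine conceptual point: permuting the coordinates of $R$ produces an \emph{isomorphic} matroid, so no isomorphism-invariant decoding can recover $R$ from the relation edges alone. The construction must therefore embed into the matroid itself both a linear order on the coordinate parts $V_1,\dots,V_k$ and the \emph{fiber} relation identifying the copies $v^{(1)},\dots,v^{(k)}$ of a common underlying element, since the decoding has to output a relation on a single copy of $V$. I would supply this by adjoining, for each underlying element, a small \emph{rigid} gadget of circuit-hyperplanes — built from private padding elements so that it is asymmetric and hence pins down both the order of the parts and the fiber — while keeping the padding disjoint across gadgets so that the whole family still satisfies the $\le 2k-2$ intersection bound and remains a valid rank-$2k$ matroid. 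The hard part is the simultaneous design of these gadgets: they must be \mso-distinguishable from the relation edges, must break the coordinate-permutation symmetry so that parts, their order and fibers all become \mso-definable, and must not spoil the sparse-paving condition. Once the gadgets are in place the decoding is an \mso interpretation that recovers the parts, their order and the fibers, reads $R$ off the relation edges, and collapses the first-coordinate primaries to the output universe $V$; composing encoding and decoding then yields the identity on $k$-ary relations.
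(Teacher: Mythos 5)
Your first direction (rank-$k$ matroids into $k$-ary relations) is correct: encoding by the independence predicate on $k$-tuples, decoding by ``contained in some $R$-tuple'', with filtering plus a default output to make the decoding well-typed, and it is essentially the paper's first encoding (the paper phrases it as an inclusion into $k$-uniform hypergraphs, i.e.\ the hypergraph of bases). Your second direction also has the right combinatorial core, identical to the paper's: Knuth-style sparse paving matroids of rank $2k$, with a doubling device ensuring that distinct tuples yield non-bases meeting in at most $2k-2$ elements, so that the matroid axioms hold. You also correctly diagnose the genuine obstruction to reversibility: the matroid built from relation edges alone is invariant under permuting the coordinate parts and forgets the fibers, so no isomorphism-invariant decoding of it can exist. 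The gap is that your proposal stops exactly at the crux created by this observation: the ``rigid gadgets'' of circuit-hyperplanes that are supposed to pin down the order of the parts $V_1,\ldots,V_k$ and the fiber relation are never constructed. You only list the properties they must have (asymmetric, \mso-distinguishable from relation edges, compatible with the $\le 2k-2$ intersection bound) and concede that their simultaneous design is ``the hard part''. Since, by your own argument, the encoding is not invertible without them, this is a missing proof rather than a routine omission: it is not at all obvious how to build, from size-exactly-$2k$ non-bases, an asymmetric configuration that a decoder can provably separate from the relation edges uniformly over all input relations.

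The paper removes most of this burden by a different factorization: it first reduces $k$-ary relations to $k$-uniform \emph{hypergraphs} (an unordered object), and only then encodes hypergraphs into rank-$2k$ sparse paving matroids, taking two copies of each vertex and declaring the union of the two copies of each hyperedge a non-basis. Because hyperedges are unordered, no coordinate-order gadget is needed at the matroid level; the decoding nondeterministically guesses the partition into the two copy-classes, the point being that any admissible guess should return a hypergraph isomorphic to the original one. This pushes the symmetry-breaking you worry about into the (separate, and more tractable) equivalence of $k$-ary relations with $k$-uniform hypergraphs, where tuples can be handled by copying and marking. Two caveats are worth recording. First, even the guess-a-partition decoding needs more constraints than ``every non-basis splits into two size-$k$ halves, one per part'': for the $4$-cycle $\set{a,b},\set{b,c},\set{c,d},\set{d,a}$, the partition into $\set{a,a',c,c'}$ and $\set{b,b',d,d'}$ satisfies that condition but decodes to two disjoint edges, so additional \mso-checkable requirements on the guess (ruling out partitions that keep both copies of a vertex together) are needed; your instinct that fibers are a real issue is thus sound even on the paper's route. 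Second, if you restructure your argument to factor through $k$-uniform hypergraphs, your shadow construction collapses to the paper's two-copies construction, and the only remaining work is this partition analysis, rather than the design of order gadgets, which you left open.
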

\begin{proof}
In the proof, we pass through the class of \emph{$k$-uniform hypergraphs}, these are hypergraphs where all hyperedges have size exactly $k$. One can easily show that the class of $k$-ary relations is equivalent to the class of $k$-uniform hypergraphs. (For example, an ordered $k$-tuple in a set $X$ can be seen as an unordered set of size $k$ in a set obtained by taking $k$ copies of $X$.) Therefore, to prove the theorem, it will be enough to show that the class of $k$-uniform hypergraphs is between the classes of general matroids of ranks $k$ and $2k$, respectively, as in the following diagram:
\[
\begin{tikzcd}
\txt{general matroids \\ of rank $k$}
\ar[r]
&
\text{$k$-uniform hypergraphs}
\ar[r]
&
\txt{general matroids \\ of rank $2k$}
\end{tikzcd}
\]
The first encoding is simply inclusion, so we concentrate on the second one. This encoding is based on sparse paving matroids that were used by Knuth~\cite{knuth1974asymptotic} to show that the number of general matroids on $n$ elements is doubly exponential in $n$. Consider a $k$-uniform hypergraph $G$. We will define a matroid of rank $2k$, whose vertices are two copies of the vertices in the hypergraph. To define the matroid structure, we will say which sets of size $2k$ are dependent (we use the name \emph{non-basis} for a set of size $2k$ that is dependent). This will specify the matroid structure; since it will tell us which sets of rank $2k$ are independent, and in a matroid of rank $2k$, a set is independent if and only if it is contained in an independent set of size exactly $2k$. The non-bases are defined to be:
\begin{align*}
\setbuild{X_1 \cup X_2}{$X_1$ and $X_2$ are the two copies of some hyperedge in $G$}
\end{align*}
The set of non-bases has the property that one cannot find two non-bases which agree on all but one element; this property implies that the matroid axioms are satisfied by the corresponding independence relation~\cite[Lemma 8]{bansal2015number}.

We now show the decoding. Given a general matroid, the original hypergraph is recovered as follows. We first guess a partition of the matroid elements into two parts, which has the property that every non-basis consists of two subsets of size $k$, one in each part. The original $k$-uniform hypergraph is recovered by returning one of the two parts, and the subsets of size $k$ that can be extended to non-bases.

The fact that $k$ is fixed seems to be important for \mso definability, since we need to define in \mso that the size of a set is $k$, to freely move between bases and non-bases.
\end{proof}
A corollary of the above theorem is that for general matroids, bounded branchwidth (which is defined in the same way as for representable matroids, since general matroids also have a notion of rank) is not equivalent to bounded \mso width, and not particularly useful in the context of \mso and recognisability. Indeed, the class of all binary relations, which has unbounded \mso width, encodes in the class of general matroids of rank at most 4, and thus the latter class has unbounded \mso width, although its branchwidth is trivially 4.

\bibliographystyle{plain}
\bibliography{bib}

\begin{thebibliography}{10}

\bibitem{arnborgLagergrenSeese1988}
Stefan Arnborg, Jens Lagergren, and Detlef Seese.
\newblock Problems easy for tree-decomposable graphs extended abstract.
\newblock In Timo Lepist{\"o} and Arto Salomaa, editors, {\em Automata,
  Languages and Programming}, pages 38--51, Berlin, Heidelberg, 1988. Springer
  Berlin Heidelberg.

\bibitem{bansal2015number}
Nikhil Bansal, Rudi~A Pendavingh, and Jorn~G van~der Pol.
\newblock On the number of matroids.
\newblock {\em Combinatorica}, 35:253--277, 2015.

\bibitem{lmcs:1208}
Achim Blumensath and Bruno Courcelle.
\newblock {On the Monadic Second-Order Transduction Hierarchy}.
\newblock {\em {Logical Methods in Computer Science}}, {Volume 6, Issue 2},
  June 2010.

\bibitem{bojanczyk_recobook}
Miko{\l}aj Boja\'nczyk.
\newblock Languages recognised by finite semigroups, and their generalisations
  to objects such as trees and graphs, with an emphasis on definability in
  monadic second-order logic.
\newblock \url{https://www.mimuw.edu.pl/~bojan/papers/algebra-26-aug-2020.pdf},
  2020.

\bibitem{DBLP:books/ems/21/Bojanczyk21}
Mikolaj Bojanczyk.
\newblock Algebra for trees.
\newblock In Jean{-}{\'{E}}ric Pin, editor, {\em Handbook of Automata Theory},
  pages 801--838. European Mathematical Society Publishing House, Z{\"{u}}rich,
  Switzerland, 2021.

\bibitem{matroids-definability}
Miko{\l}aj Boja{\'{n}}czyk and Colin Geniet.
\newblock Definability equals recognizability for binary matroids of bounded
  linear branchwidth.
\newblock Unpublished manuscript, 2022.

\bibitem{linearcliquewidth2021}
Miko{\l}aj Boja{\'n}czyk, Martin Grohe, and Micha{\l} Pilipczuk.
\newblock {Definable decompositions for graphs of bounded linear cliquewidth}.
\newblock {\em {Logical Methods in Computer Science}}, {Volume 17, Issue 1},
  January 2021.

\bibitem{monadicMonadic}
Miko{\l}aj Boja\'nczyk, Bartek Klin, and Julian Salamanca.
\newblock Monadic monadic second order logic.
\newblock In {\em Samson Abramsky on Logic and Structure in Computer Science
  and Beyond}. Springer, 2023.

\bibitem{bojanczykDefinabilityEqualsRecognizability2016a}
Miko{\l}aj Boja\'nczyk and Michal Pilipczuk.
\newblock Definability equals recognizability for graphs of bounded treewidth.
\newblock In Martin Grohe, Eric Koskinen, and Natarajan Shankar, editors, {\em
  Proceedings of the 31st {{Annual ACM}}/{{IEEE Symposium}} on {{Logic}} in
  {{Computer Science}}, {{LICS}} '16, {{New York}}, {{NY}}, {{USA}}, {{July}}
  5-8, 2016}, pages 407--416. {ACM}, 2016.

\bibitem{cartonRegularLanguagesWords2011}
Olivier Carton, Thomas Colcombet, and Gabriele Puppis.
\newblock Regular {{Languages}} of {{Words}} over {{Countable Linear
  Orderings}}.
\newblock In Luca Aceto, Monika Henzinger, and Ji{\v r}{\'\i} Sgall, editors,
  {\em Automata, {{Languages}} and {{Programming}}}, Lecture {{Notes}} in
  {{Computer Science}}, pages 125--136. {Springer Berlin Heidelberg}, 2011.

\bibitem{chen2016profinite}
Liang-Ting Chen, Ji{\v{r}}{\'\i} Ad{\'a}mek, Stefan Milius, and Henning Urbat.
\newblock Profinite monads, profinite equations, and reiterman’s theorem.
\newblock In {\em International Conference on Foundations of Software Science
  and Computation Structures}, pages 531--547. Springer, 2016.

\bibitem{courcelleMonadicSecondorderLogic1990}
Bruno Courcelle.
\newblock The monadic second-order logic of graphs. {{I}}. {{Recognizable}}
  sets of finite graphs.
\newblock {\em Information and Computation}, 85(1):12--75, March 1990.

\bibitem{courcelle1991}
Bruno Courcelle.
\newblock The monadic second-order logic of graphs v: on closing the gap
  between definability and recognizability.
\newblock {\em Theoretical Computer Science}, 80(2):153 -- 202, 1991.

\bibitem{CourcelleX}
Bruno Courcelle.
\newblock The monadic second-order logic of graphs {X:} linear orderings.
\newblock {\em Theor. Comput. Sci.}, 160(1{\&}2):87--143, 1996.

\bibitem{courcelle1995logical}
Bruno Courcelle and Joost Engelfriet.
\newblock A logical characterization of the sets of hypergraphs defined by
  hyperedge replacement grammars.
\newblock {\em Mathematical Systems Theory}, 28(6):515--552, 1995.

\bibitem{courcelleGraphStructureMonadic2012}
Bruno Courcelle and Joost Engelfriet.
\newblock {\em Graph {{Structure}} and {{Monadic Second}}-{{Order Logic}} - {{A
  Language}}-{{Theoretic Approach}}}, volume 138 of {\em Encyclopedia of
  Mathematics and Its Applications}.
\newblock {Cambridge University Press}, 2012.

\bibitem{courcelle2007vertex}
Bruno Courcelle and Sang-il Oum.
\newblock Vertex-minors, monadic second-order logic, and a conjecture by seese.
\newblock {\em Journal of Combinatorial Theory, Series B}, 97(1):91--126, 2007.

\bibitem{diekert1995book}
Volker Diekert and Grzegorz Rozenberg.
\newblock {\em The book of traces}.
\newblock World scientific, 1995.

\bibitem{eilenbergAutomataGeneralAlgebras}
Samuel Eilenberg and Jesse Wright.
\newblock Automata in {{General Algebras}}.
\newblock {\em Information and Control}, 11:452--470, 1967.

\bibitem{engelfriet1991}
Joost Engelfriet.
\newblock A characterization of context-free nce graph languages by monadic
  second-order logic on trees.
\newblock In Hartmut Ehrig, Hans-J{\"o}rg Kreowski, and Grzegorz Rozenberg,
  editors, {\em Graph Grammars and Their Application to Computer Science},
  pages 311--327, Berlin, Heidelberg, 1991. Springer Berlin Heidelberg.

\bibitem{engelfrietMSODefinableString2001}
Joost Engelfriet and Hendrik~Jan Hoogeboom.
\newblock {{MSO Definable String Transductions}} and {{Two}}-way
  {{Finite}}-state {{Transducers}}.
\newblock {\em ACM Trans. Comput. Logic}, 2(2):216--254, 2001.

\bibitem{hlineny06}
Petr Hlinen{\'{y}}.
\newblock Branch-width, parse trees, and monadic second-order logic for
  matroids.
\newblock {\em J. Comb. Theory, Ser. {B}}, 96(3):325--351, 2006.

\bibitem{knuth1974asymptotic}
Donald~E Knuth.
\newblock The asymptotic number of geometries.
\newblock {\em Journal of Combinatorial Theory, Series A}, 16(3):398--400,
  1974.

\bibitem{oum2006approximating}
Sang-il Oum and Paul Seymour.
\newblock Approximating clique-width and branch-width.
\newblock {\em Journal of Combinatorial Theory, Series B}, 96(4):514--528,
  2006.

\bibitem{oxley2011matroid}
James Oxley.
\newblock {\em Matroid Theory}.
\newblock Oxford Graduate Texts in Mathematics. Oxford University Press, 2011.

\bibitem{steinbyGeneralVarietiesTree1998}
Magnus Steinby.
\newblock General varieties of tree languages.
\newblock {\em Theoretical Computer Science}, 205(1):1--43, September 1998.

\bibitem{Thomas97}
Wolfgang Thomas.
\newblock Languages, automata, and logic.
\newblock In {\em Handbook of formal languages, Vol.\ 3}, pages 389--455.
  Springer, Berlin, 1997.

\end{thebibliography}

\newpage
\appendix

\section{Hyper-rankwidth for hypergraphs}
\label{ap:hyper-rankwidth}
In this part of the appendix, we prove Theorem~\ref{thm:hypergraph-rankwidth}, which says that for classes of hypergraphs, bounded \mso width coincides with bounded hyper-rankwidth.

The definition of hyper-rankwidth uses the rank of a certain matrix to define the complexity of a partition of the vertices in a hypergraph. In this proof, it will be easier to use a different measure, which we call sensitivity.
For a subset of vertices $U$ in a hypergraph, define its \emph{sensitivity} to be the number equivalence classes in the following equivalence relation on subsets of $U$, which considers two subsets $X$ and $Y$ to be equivalent if 
\begin{align*}
X \cup Z \text{ is a hyperedge} \iff Y \cup Z \text{ is a hyperedge} \qquad \text{for every $Z$ disjoint with $U$}.
\end{align*}
This definition is a hypergraph analogue of Myhill-Nerode equivalence.
We now explain why the sensitivity and the rank of the matrix in the definition of hyper-rankwidth are roughly the same quantity. Indeed, the sensitivity of $U$ is equal to the number of distinct rows in the matrix from the definition of hyper-rankwidth. Since the matrix uses a two-element field, it satisfies 
\begin{align}\label{eq:sensitivity-vs-rank}
\text{rank of the matrix} \quad \le \quad 
\myunderbrace{\text{number of distinct rows}}{sensitivity} \quad \le \quad
2^{\text{rank of the matrix}}.
\end{align}
Since the row rank and the column rank are equal, it follows that the sensitivity of a set is bounded an (exponential) function of the sensitivity of its complement. 

Using sensitivity, we prove below the two implications in Theorem~\ref{thm:hypergraph-rankwidth}. 

\subparagraph*{From bounded \mso width to bounded hyper-rankwidth.}
Suppose that a class of hypergraphs $\Cc$ has bounded \mso width. As we remarked after Definition~\ref{def:bounded-mso-width}, the witnessing \mso transduction can use binary trees (instead of general trees) as its domain, and it can also be assumed to be deterministic. Also, we can avoid copying, by adding nodes to the input tree. Summing up, if $\Cc$ has bounded \mso width, then there is an \mso transduction 
\begin{align*}
f : \text{binary trees} \to \text{hypergraphs}
\end{align*}
which does not use copying or colouring, and whose range contains $\Cc$. We can also assume that $f$ does not use filtering, since this will only increase the range, and therefore $f$ is simply an \mso interpretation. In the case of hypergraphs, this means that $f$ is given by two formulas: 
\begin{align*}
\hspace{1cm}\myunderbrace{\varphi(x)}{which nodes of the input tree \\ are vertices of the output hypergraph} 
\hspace{4cm}\myunderbrace{\psi(X)}{which subsets of nodes in the input tree \\ are hyperedges of the output hypergraph} 
\end{align*}
Furthermore, without loss of generality, we can assume that only leaves of the tree satisfy $\varphi(x)$, i.e.~vertices of the output hypergraph are represented using leaves of the input tree.
We will show that the outputs of $f$ have bounded hyper-rankwidth. To prove this, for an input tree $T$, as a tree decomposition of $f(T)$ we will simply use the same tree, since the vertices of the output graph are the leaves in the tree $T$. It remains to justify that this tree decomposition has bounded hyper-rankwidth. By the inequalities from~\eqref{eq:sensitivity-vs-rank}, it is enough to prove that the tree decomposition has bounded sensitivity, which in turn is proved using an Ehrenfeucht-Fra\"iss\'e argument. 

\subparagraph*{From bounded hyper-rankwidth to bounded \mso width.} The proof boils down to showing that a hypergraph can be reconstructed, using \mso, from a bounded hyper-rankwidth decomposition, viewed as a tree labelled by a suitably chosen (finite) set of operation.

Suppose that a hypergraph has hyper-rankwidth at most $k$, as witnessed by some tree decomposition $T$. For a node $x$ of this tree decomposition, define $ U_x$ to be the vertices of the hypergraph that are labels of leaves in the subtree of $x$.  Define $\sim_x$ to be the equivalence relation on subsets of $ U_x$ that appears in the definition of  sensitivity of $ U_x$. Choose for each vertex $x$ some enumeration of the equivalence classes of $\sim_x$.  By the inequalities in~\eqref{eq:sensitivity-vs-rank}, we know that for every subset $ U_x$ has sensitivity is at most $2^k$, and therefore this enumeration is a list of length at most $2^k$.
For a subset of $ U_x$, define its $x$-colour to be the number in $\set{1,\ldots,2^k}$ which represents the index of the corresponding equivalence class in the chosen enumeration. The following claim shows that the colours can be computed in a deterministic bottom up pass.
\begin{claim}\label{claim:compositionality-rankwidth}
 Let $x$ be a node of $T$ with two children $y$ and $z$. There is a function 
 \begin{align*}
 \alpha_x : \set{1,\ldots,2^k} \times \set{1,\ldots,2^k} \to \set{1,\ldots,2^k}
 \end{align*}
 such that for every $Y \subseteq V_y$ and $Z \subseteq V_z$, the $x$-colour of $Y \cup Z$ is obtained by appling $\alpha_x$ to the $y$-colour of $Y$ and the $z$-colour of $Z$. 
\end{claim}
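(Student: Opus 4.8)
The plan is to show that the function $\alpha_x$ is well defined by establishing a compositionality property: the pair consisting of the $\sim_y$-class of $Y$ and the $\sim_z$-class of $Z$ determines the $\sim_x$-class of $Y \cup Z$. First I would record the relevant structural fact. Since $x$ has children $y$ and $z$, and since every vertex of the hypergraph labels a unique leaf, the sets $U_y$ and $U_z$ (the sets written $V_y,V_z$ in the statement) are disjoint and their union is $U_x$; hence every subset of $U_x$ decomposes uniquely as $Y \cup Z$ with $Y \subseteq U_y$ and $Z \subseteq U_z$. Given this, it suffices to prove the implication
\[
Y \sim_y Y' \text{ and } Z \sim_z Z' \quad\Longrightarrow\quad Y \cup Z \sim_x Y' \cup Z'.
\]
Once this implication is available, $\alpha_x$ can be defined by choosing any representatives of the given $y$- and $z$-colours, forming their union, and reading off its $x$-colour; the implication guarantees that the result is independent of the choice of representatives.

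The core step is the implication itself, which I would prove by replacing $Y$ with $Y'$ and then $Z$ with $Z'$ in two stages. Fix any $W$ disjoint from $U_x$. Writing $Y \cup Z \cup W = Y \cup (Z \cup W)$ and observing that $Z \cup W$ is disjoint from $U_y$ (because $Z \subseteq U_z$ is disjoint from $U_y$, and $W$ is disjoint from $U_x \supseteq U_y$), the definition of $\sim_y$ applied to $Y \sim_y Y'$ shows that $Y \cup Z \cup W$ is a hyperedge if and only if $Y' \cup Z \cup W$ is. Rewriting the latter as $Z \cup (Y' \cup W)$ and noting, by the symmetric computation, that $Y' \cup W$ is disjoint from $U_z$, the definition of $\sim_z$ applied to $Z \sim_z Z'$ shows that $Y' \cup Z \cup W$ is a hyperedge if and only if $Y' \cup Z' \cup W$ is. Chaining the two equivalences, and quantifying over all such $W$, yields $Y \cup Z \sim_x Y' \cup Z'$, which is exactly what is needed.

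I do not expect a genuine obstacle here: the argument is the hypergraph analogue of the standard fact that Myhill--Nerode equivalence is a congruence with respect to the splitting induced by the tree. The only point requiring care is the bookkeeping of disjointness. Each invocation of the definition of $\sim_y$ (respectively $\sim_z$) quantifies over test sets disjoint from $U_y$ (respectively $U_z$), so one must verify that the auxiliary sets $Z \cup W$ and $Y' \cup W$ really meet these disjointness constraints before the hypotheses $Y \sim_y Y'$ and $Z \sim_z Z'$ can be applied. This is precisely where the disjointness $U_y \cap U_z = \emptyset$ together with $W \cap U_x = \emptyset$, established in the first step, is used.
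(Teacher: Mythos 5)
Your proof is correct and follows the same approach as the paper: the paper's proof consists of the single assertion that the $\sim_x$-class of $Y \cup Z$ is determined by the $\sim_y$-class of $Y$ and the $\sim_z$-class of $Z$, which is exactly the congruence implication you establish. Your two-stage replacement argument, with the disjointness bookkeeping for the test sets $Z \cup W$ and $Y' \cup W$, simply fills in the details the paper leaves implicit.
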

\begin{proof}
 The equivalence class of $Y \cup Z$ under $\sim_x$ is uniquely determined by the equivalence classes of $Y$ and $Z$, under $\sim_y$ and $\sim_z$, respectively.
\end{proof}

Define $S$ to be the tree that is obtained from $T$ by forgetting the original labels, and labelling each node $x$ as follows: (a) if $x$ is a leaf, then it is labelled by the $x$-colours of the two possible sets in $ U_x$, namely $\emptyset$ and $\set x$; (b) if $x$ is not a leaf, then it is labelled by the function $\alpha_x$. This is a tree over a finite alphabet. Using Claim~\ref{claim:compositionality-rankwidth}, there is a finite bottom-up tree automaton with $2^k$ states, which inputs the tree $T$ with a distinguished subset of leaves $X$, corresponding to a subset of vertices in the hypergraph, processes the tree, and computes for each tree node $x$ the $x$-colour of $X \cap  U_x$. Using this automaton, we can write an \mso interpretation which maps the tree $S$ to the original hypergraph. 
\section{\mso encdoings between some classes of structures}
\label{sec:proof-of-figure}
In this part of the appendix, we justify the claims made in Examples~\ref{ex:examples-of-encodings} and~\ref{ex:two-kinds-of-trees} about the existence and non-existence of \mso encodings between selected classes of structures. A summary of all classes of structures discussed in these examples, along with some additional classes (trees of bounded height), can be found in Figure~\ref{fig:encodings}.

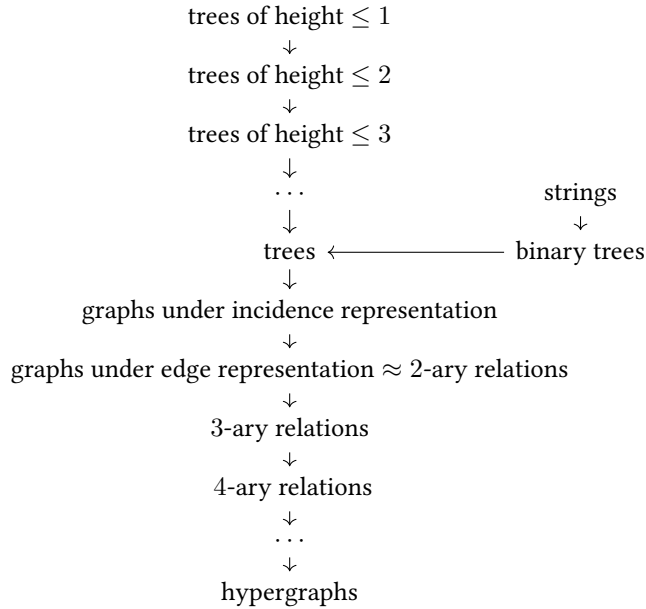
\begin{figure}
 \[
\begin{tikzcd}
 [row sep=0.2cm, column sep=-1cm]
\text{trees of height $\le 1$} 
\ar[d]\\
\text{trees of height $\le 2$} 
\ar[d]\\
\text{trees of height $\le 3$} 
\ar[d]\\
\cdots
\ar[d]
& 
\text{strings}
\ar[d]
\\
\text{trees} 
\ar[d]
&
\text{binary trees}
\ar[l]
\\
\text{graphs under incidence representation} 
\ar[d] \\
\text{graphs under edge representation} \approx \text{$2$-ary relations}
\ar[d] \\
\text{$3$-ary relations} 
\ar[d]\\
\text{ $4$-ary relations} 
\ar[d]\\
\cdots
\ar[d]\\
\text{hypergraphs} 
\end{tikzcd}
\]
\caption{
 \label{fig:encodings} \mso encodings between some classes of structures.
}
\end{figure}

\subsection{Completeness of Figure~\ref{fig:encodings}}
\label{sec:completeness-of-the-figure}
In the proof, 
we discuss separately two parts of the figure, the namely the parts above and below graphs under incidence representation.

\subparagraph*{Upper part.}
We begin with the upper part of the figure, namely:
\[
\begin{tikzcd}
 [row sep=0.2cm, column sep=-1cm]
 \text{trees of height $1$} 
 \ar[d]\\
 \text{trees of height $2$} 
 \ar[d]\\
 \text{trees of height $3$} 
 \ar[d]\\
 \cdots
 \ar[d]
 & 
 \text{strings}
 \ar[d]
 \\
 \text{trees} 
 \ar[d]
 &
 \text{binary trees}
 \ar[l]
 \\
 \text{graphs under incidence representation} 
\end{tikzcd}
\]

All of the arrows in the upper part are straightforward encodings (in fact, all encodings are inclusions, except the encoding of strings into binary trees). We focus on their completeness, i.e.~that there are no \mso encodings in the upper part, other than those in the figure and their compositions. We prove this by referring to the surjective transduction order of Blumensath and Courcelle. Since the latter order is strict on the left column in the upper part of the figure, then the left column is also strict in our encoding hierarchy. (Trees and binary trees are equivalent in the hierarchy of Blumensath and Courcelle, but non-equivalent under encodings as we will show below, and therefore the two orders on classes are different.)

It remains to rule out any \mso encodings between the right column, i.e.~strings and binary trees, and the left column discussed above; other than the encoding from binary trees to trees. Blumensath and Courcelle show that for every $k \in \set{1,2,\ldots}$, trees of height $k$ are strictly below strings in their hierarchy, and therefore also in our order. This rules out any encoding from the right column to trees of bounded height. It remains to rule out encodings from the left column to the right column, which is done in the following lemma.

\begin{lemma}
 There is no \mso encoding from trees of height one to binary trees. 
\end{lemma}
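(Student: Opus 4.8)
The plan is to exploit a mismatch in the sizes of automorphism groups. Write $S_n$ for the tree of height one with $n$ leaves, i.e. a root together with $n$ children. Since the trees are unordered and unlabelled, the root is the unique internal node and every permutation of the leaves is an automorphism, so $\mathrm{Aut}(S_n)=\mathrm{Sym}(n)$ has order $n!$. On the other side, an unordered binary tree $T$ has a very small automorphism group: an automorphism is determined by the choice, at each internal node whose two subtrees are isomorphic, of whether to swap them, so $|\mathrm{Aut}(T)|\le 2^{|T|}$. The two facts I will combine with this are that every \mso transduction enlarges the universe by at most a constant factor, and that a \emph{deterministic} \mso transduction (copying followed by interpretation) sends each automorphism of its input to an automorphism of its output.

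Suppose toward a contradiction that $f$ is an \mso encoding of trees of height one into binary trees, with decoding $g$. Put $f$ in normal form as a $j$-colouring followed by a deterministic transduction $f_0$, and fix constants $k,k'$ bounding the universe blow-up of $f_0$ and of $g$. For each $n$, the fact that $g\circ f$ contains the diagonal yields a colouring $\chi_n$ of $S_n$ and a binary tree $T_n=f_0(\chi_n(S_n))$ with $S_n\in g(T_n)$; note $|T_n|\le k(n+1)$.

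The deterministic part $f_0$ induces a homomorphism $\rho\colon \mathrm{Aut}(\chi_n(S_n))\to \mathrm{Aut}(T_n)$, $\sigma\mapsto\hat\sigma$, where $\hat\sigma$ acts on the surviving copies by $(a,i)\mapsto(\sigma a,i)$; this is well defined and relation-preserving because the defining formulas are invariant under $\sigma$. Now $\mathrm{Aut}(\chi_n(S_n))$ contains $\mathrm{Sym}(L)$ for each colour class $L$ of leaves, and the set $D_n$ of elements surviving into the output is \mso-definable, hence a union of orbits, so $D_n$ contains either all of $L$ or none of it. If $L\subseteq D_n$, then $\rho$ restricted to $\mathrm{Sym}(L)$ is injective, whence $|L|!\le |\mathrm{Aut}(T_n)|\le 2^{|T_n|}\le 2^{k(n+1)}$, which forces $|L|=O(n/\log n)$. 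Thus every surviving colour class has size $O(n/\log n)$; since there are at most $j$ classes, at most $O(n/\log n)$ leaves survive, so $|T_n|\le k\,|D_n|=O(n/\log n)$. But then the decoding gives $n+1=|S_n|\le k'|T_n|=O(n/\log n)$, a contradiction for large $n$.

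The crux, and the step I expect to be the main obstacle, is precisely the bookkeeping around colouring: the naive argument (inject $\mathrm{Sym}(n)$ into $\mathrm{Aut}(T_n)$ and compare $n!$ with $2^{O(n)}$) works only when $f$ is deterministic, and the nondeterminism could in principle be used to shatter the symmetry and then discard the large symmetric colour class from the output. The resolution above is to extract a bound on the largest \emph{surviving} colour class from the automorphism count, and then turn the resulting small output universe against the decoding via the linear size bound on $g$; this is what makes the two size constraints (on $f_0$ and on $g$) cooperate. A minor supporting estimate to record is that $m!>2^{k(m+1)}$ for large $m$, i.e. that the factorial outgrows any fixed exponential, which powers both horns of the argument.
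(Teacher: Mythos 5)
Your proof is correct, but it takes a genuinely different route from the paper. The paper's argument introduces a transferable invariant: a class has \emph{definable linear orders} if some \mso formula with set parameters defines, for a suitable choice of parameters, a linear order on every structure in the class; this property passes down along \mso encodings, binary trees have it, and trees of height one do not (any fixed tuple of set parameters leaves two leaves indistinguishable, and the swap automorphism kills any would-be order). Your argument instead runs a quantitative counting contradiction: you transport automorphisms through the deterministic part of the transduction, bound $|\mathrm{Aut}(T)|\le 2^{|T|}$ for binary trees against $|\mathrm{Sym}(L)|=|L|!$ for a surviving colour class, and then play the linear size bounds of the encoding and the decoding against each other to force $n+1 = O(n/\log n)$. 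Both proofs exploit the same root obstruction (massive symmetry of the star versus near-rigidity of binary trees), and both must deal with the same subtlety -- nondeterministic colouring can shatter the symmetry -- which the paper absorbs into the set parameters of its order formula and you absorb into colour classes, observing that the definable surviving set is a union of orbits. What the paper's route buys is an abstract, reusable invariant (definable linear orders) that can separate other pairs of classes without redoing any counting; what your route buys is a self-contained, elementary argument that needs no auxiliary transfer lemma, only the linear universe growth of \mso transductions and the automorphism homomorphism. Two cosmetic points: the surviving set should be discussed per copy index (for each copy $i$ and colour class $L$, either $L\times\{i\}$ survives entirely or not at all), and $|T_n|$ equals $|D_n|$ rather than being bounded by $k|D_n|$ -- the factor $k$ belongs in the bound of $|D_n|$ by the number of surviving input elements; neither affects the conclusion.
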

\begin{proof}[Proof sketch]
 The issue is that one can use \mso to define a linear order in a binary tree, but not in an unranked tree, even of height one. More formally, we say that a class of structures $\Cc$ has \emph{definable linear orders} if there is an \mso formula 
 \begin{align*}
 \varphi(\myunderbrace{X_1,\ldots,X_n}{set parameters},y,z).
\end{align*}
such that for every structure in $\Cc$, there exists a choice of the parameters such that if this choice is fixed, the formula defines a linear order on all elements in the structure. One can show that if a class $\Cc$ has definable linear orders, then the same is true for every class which encodes in it. Since binary trees have definable linear orders, and trees of height one do not, it follows that there is no \mso encoding from the latter to the former. 
\end{proof}

\subparagraph*{Lower part.}
We now turn to the lower part of the figure: 
\[
 \begin{tikzcd}
 [row sep=0.2cm, column sep=-1cm]
 \text{graphs under incidence representation} 
 \ar[d] \\
 \text{graphs under edge representation} \approx \text{$2$-ary relations}
 \ar[d] \\
 \text{$3$-ary relations} 
 \ar[d]\\
 \text{ $4$-ary relations} 
 \ar[d]\\
 \cdots
 \ar[d]\\
 \text{hypergraphs} 
 \end{tikzcd}
 \]

Again, the encodings in the figure are straightforward exercises and are left to the reader. (For hypergraphs, we prove in Lemma~\ref{lem:hypergraphs-maximal} that every class of structures admits an encoding to hypergraphs.) We now prove that there are no encodings in the bottom-up direction. 
We prove this using a counting argument. Define the \emph{growth rate} of a class of structures to be the function that maps $n \in \set{1,2,\ldots}$ to the number of non-isomorphic structures in the class that have a universe of size at most $n$. The growth rates of the classes in the lower part of the diagram are given in the following table:

\begin{center}
 \begin{tabular}{l|l}
 {class} & {log(growth rate)} 
 \\ \hline 
 graphs under incidence representation & $\Theta(n \cdot \log n)$ \\
 $k$-ary relations & $\Theta(n^k)$ \\ 
 hypergraphs & $\Theta(2^n)$
 \end{tabular}
\end{center}

The following lemma shows that encodings are monotone with respect to growth rates, up to a linear correction on the size of the structure. Together with the above table of growth rates, the lemma implies that all rows in the lower part of the diagram are non-equivalent. 
 \begin{lemma}\label{lem:counting-argument}
 If there is an \mso encoding from $\Cc$ to $\Dd$, then 
 \begin{align*}
 \exists k \forall n \quad 
 (\text{growth rate of $\Cc$})(n) 
 \ \le \ 
 (\text{growth rate of $\Dd$})(kn).
 \end{align*}
 \end{lemma}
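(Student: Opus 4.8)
The plan is to show that an \mso encoding cannot blow up the number of non-isomorphic structures by more than a fixed exponential-in-linear factor, by exploiting two facts about \mso transductions: they increase the universe only linearly, and each output is determined by the input together with a bounded amount of guessed information (the colours and the copy index). Concretely, let $e : \Cc \to \Dd$ be the encoding and $d : \Dd \to \Cc$ its decoding, so that $d \circ e$ is the identity on $\Cc$. The key observation is that $e$ is injective up to isomorphism in the following sense: if $A, A' \in \Cc$ are non-isomorphic, then they cannot share a common $e$-output, because if $B$ were an output of both, then applying $d$ to $B$ would have to yield something isomorphic to both $A$ and $A'$, forcing $A \cong A'$. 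Hence distinct (up to isomorphism) structures in $\Cc$ have disjoint nonempty sets of $e$-images in $\Dd$.

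First I would fix a normal form for the encoding $e$, using the remark after Definition~\ref{def:mso-transduction} that every \mso transduction decomposes as colouring, then copying, then filtering, then interpretation. The copying step multiplies the universe by a constant $c$, and the interpretation's universe formula can only select a subset, so if $A \in \Cc$ has universe of size at most $n$, then every output $B = e(A)$ has universe of size at most $cn$; this is the source of the linear correction $k$. Since each such $B$ lies in $\Dd$ and has universe of size at most $cn$, picking one representative image $B_A$ for each isomorphism class $A$ gives, by the injectivity observation above, an injection from the isomorphism classes of $\Cc$ with universe size at most $n$ into the isomorphism classes of $\Dd$ with universe size at most $cn$. Taking $k = c$ yields exactly the claimed inequality $(\text{growth rate of }\Cc)(n) \le (\text{growth rate of }\Dd)(kn)$.

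The step I expect to require the most care is verifying that every isomorphism class of $\Cc$ really does produce at least one output under $e$, so that the map on isomorphism classes is genuinely defined. This is where the one-sided inverse is essential: for any $A \in \Cc$, since $d(e(A)) = A$ up to isomorphism, the composite is defined on $A$, which forces $e(A)$ to be nonempty. Without the decoding, a surjective transduction in the other direction would not give this, and indeed the lemma is false for the coarser surjective transduction order — this is consistent with Example~\ref{ex:two-kinds-of-trees}, where the two columns have the same growth rate and are equivalent under surjective transductions but not under encodings. The only genuinely technical point is bounding the universe-size blowup by a single constant $c$ independent of the input; this follows because the number of copies in the copying step and the vocabulary are fixed by the finite syntax of the transduction, so the bound is uniform over all inputs.
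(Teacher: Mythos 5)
Your overall strategy --- injectivity of the encoding up to isomorphism, combined with the linear size growth of \mso transductions --- is exactly the strategy of the paper's proof, but your justification of injectivity has a genuine gap. You claim that non-isomorphic $A, A' \in \Cc$ cannot share an $e$-output $B$ ``because applying $d$ to $B$ would have to yield something isomorphic to both $A$ and $A'$''. This presupposes that $d$ is defined on $B$, which the definition of an encoding does not guarantee. The requirement that $e;d$ be the identity on $\Cc$ says only that (i) every pair $(A,C)$ in the composed relation has $C \cong A$, and (ii) for each $A$ at least one output of $e$ on $A$ lies in the domain of $d$. It is perfectly consistent with this that some other output $B$ of $e$ on $A$ has $d(B) = \emptyset$; such a ``dead'' output may also be an output of $e$ on a non-isomorphic $A'$, and then no contradiction arises. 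So your claim that distinct isomorphism classes have disjoint sets of $e$-images is false as stated, and since you pick the representative $B_A$ arbitrarily among the $e$-images of $A$, the map $A \mapsto B_A$ need not be injective on isomorphism classes, which breaks the counting argument.

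This is precisely the point where the paper's proof does its real work: it first replaces the encoding by one satisfying the property that every output lies in the domain of the decoding --- possible because the domain of any \mso transduction is \mso definable, so a filtering step can be appended to $e$ --- and only then deduces injectivity from invertibility. Your argument admits an equally small repair: either perform this filtering, or choose each representative $B_A$ among those outputs of $e$ on $A$ that lie in the domain of $d$ (at least one exists, by (ii) above); for such representatives, $B_A \cong B_{A'}$ does force $A \cong A'$, since $d(B_A)$ is then nonempty and all of its elements are isomorphic to both $A$ and $A'$. The remaining ingredients of your proof --- the bound $|B| \le c\,|A|$ via the normal form (colouring, copying, filtering, interpretation), and the nonemptiness of $e(A)$ for every $A \in \Cc$ --- are correct and agree with what the paper uses.
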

\begin{proof}
 Consider a hypothetical encoding of $\Cc$ in $\Dd$. If we know that the encoding is injective, in the sense that there cannot be an output that arises from two different inputs, then the lemma follows immediately from the linear growth of \mso transductions. It remains to prove injectivity.

 We prove that if there is an encoding, then there is one that satisfies: (*) every output of the encoding is in the domain of the decoding. Once we know (*), then injectivity follows immediately from invertibility. To prove (*), we observe that the domain of the decoding is definable in \mso, like the domain of any \mso transduction. Therefore, to ensure (*), we can use filtering to restrict the outputs of the encoding to this domain. 
\end{proof}

To complete the justification of the claims in Example~\ref{ex:examples-of-encodings}, we prove that every class, not necessarily in the figure, reduces to hypergraphs. 
\begin{lemma}\label{lem:hypergraphs-maximal}
 Every class of structures admits an \mso encoding to the class of hypergraphs.
\end{lemma}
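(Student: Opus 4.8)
The plan is to build a deterministic encoding that uses copying to spread the arguments of each relation onto separate \emph{tracks}, together with a nondeterministic decoding that guesses the track decomposition and verifies it against auxiliary structural hyperedges. Fix a class $\Cc$ over a finite vocabulary with relation names $R_1,\dots,R_m$, and enumerate the argument slots of all these relations as $1,\dots,N$ (so slot $t$ is ``the $j$-th argument of $R_i$'' for some pair $(i,j)$, typed either element or set). First I would apply $(N+1)$-copying, producing copies $U_0,\dots,U_N$ of the universe $U$; I call $U_0$ the \emph{backbone} and $U_1,\dots,U_N$ the tracks, writing $a_t$ for the copy of $a\in U$ in track $t$. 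All hyperedges below are then defined by an \mso interpretation over the copied structure, which has access to the copy-linking relation and to the original relations of the input; since copying and interpretation are elementary \mso transductions, the whole encoding is an \mso transduction.

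Structural hyperedges record the track decomposition so that it can be reconstructed in the output, where only the set predicate survives. For every original element $a$ I would include the identity hyperedge $\{a_0,a_1,\dots,a_N\}$, and for every track $t$ the track hyperedge $\{a_t : a\in U\}$; these let the decoding recover, in \mso, both the partition of the vertices into tracks and the relation ``copies of the same original element''. A data hyperedge is added for each relation $R_i$ and each tuple in its interpretation: an element argument with value $a$ in slot $t$ contributes the single vertex $a_t$, and a set argument with value $S$ in slot $t$ contributes all of $\{a_t : a\in S\}$. To make data hyperedges distinguishable from structural ones, to tag each data hyperedge with its relation index, and to prevent distinct tuples from collapsing to the same set (which can happen through empty set-arguments or coinciding element-arguments), I reserve a bounded number of additional padding tracks and attach to each data hyperedge a padding pattern whose size, read modulo suitable integers, encodes the relation index and certifies nonemptiness on each used track. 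This is where modulo counting is genuinely convenient.

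The decoding is then the hypergraph-to-$\Cc$ \mso transduction that reverses this. On an honest encoding it proceeds as follows: use colouring to guess the partition of the vertices into backbone and tracks and to guess the ``same original element'' relation; use filtering to check that the guess is consistent with the structural and padding hyperedges (this check is \mso-definable precisely because those hyperedges were included); then use an \mso interpretation whose universe is the backbone $U_0$ (one representative per original element) and which reads each relation $R_i$ off the data hyperedges tagged with index $i$, projecting each track back to the original elements via the guessed identity relation. On the image of the encoding the only consistent guess yields a structure isomorphic to the input, so the composition $\mathrm{decode}\circ\mathrm{encode}$ is the identity on $\Cc$; off the image the filter may leave the decoding undefined, which is harmless. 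Hence the encoding admits a one-sided inverse and is an \mso encoding.

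The main obstacle is not the copying idea but the \emph{decodability} of the bare hypergraph: since the output vocabulary has only the set predicate, with no order and no canonical anchor elements, every structural feature must be recognizable in \mso from the family of hyperedges alone. Concretely one must (i) tell structural hyperedges apart from data hyperedges, (ii) recover the track of each vertex and the grouping of copies of a single element, and (iii) ensure that distinct tuples of distinct relations map to distinct hyperedges despite the degenerate cases above. The padding-and-modulo-counting tagging is exactly what resolves (i)--(iii); verifying that these tags can be defined and checked in \mso, and that the guessed decomposition is forced to be the right one on honest encodings, is the fiddly but routine heart of the argument.
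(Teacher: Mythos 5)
Your construction is essentially the paper's proof: the paper likewise copies each element once per argument slot (its ``colours'' are your ``tracks''), binds the copies of one element together with a dedicated hyperedge, encodes each tuple of the relation as a hyperedge spread across the slot-copies, and decodes by recovering the grouping from these structural hyperedges. The only presentational difference is that the paper factors through \emph{coloured} hypergraphs and leaves the final colour-elimination step (your padding/tagging machinery) as an exercise, so your proposal is, if anything, more explicit about the fiddly part.
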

\begin{proof}
 Let $\Cc$ be some class of structures. We will encode structures from $\Cc$ as hypergraphs. We only do the proof for the case when the vocabulary of $\Cc$ has only one relation $R(X_1,\ldots,X_k)$ on sets; the proof for more general vocabularies is done in the same way.
 
 In the first step, we encode $\Cc$ in the class of coloured hypergraphs. Here, a coloured hypergraph is a hypergraph where each vertex is assigned one of $k+1$ colours $\set{0,1,\ldots,k}$. To encode a structure $A \in \Cc$ as a coloured hypergraph, we do the following. Every element $a \in A$ is replaced by $k+1$ copies, with the corresponding colours. Furthermore, there is a hyperedge in the hypergraph which contains all of these $k+1$ copies; this hyperedge will be the only one that will contain colour $0$, and it will allow us to recover which copies correspond to each other. Next, for each tuple of sets $(A_1,\ldots,A_k)$ that is selected by the relation $R$ in the input structure, in the encoding hypergraph we create a hyperedge that contains all first copies of $A_1$, all second copies of $A_2$, and so on up to the $k$-th copies of $A_k$. It is easy to see that this encoding is an \mso transduction, and that there is a corresponding decoding.
 
 Finally, it remains to show how the colours can be encoded, i.e.~that there is an encoding from coloured hypergraphs to (uncoloured) hypergraphs. This is a straightforward construction that is left to the reader. 
 \end{proof}
\subsection{\mso encodings between tree classes}
\label{ex:tree-class-encodings}
We now turn to the tree classes described in Example~\ref{ex:two-kinds-of-trees}.  For convenience, these classes are repeated in Figure~\ref{fig:two-kinds-of-tree-classes}.  In Section~\ref{sec:completeness-of-the-figure}, we have already justified that binary trees are strictly below trees in the \mso encoding order, thus proving that the two columns in Figure~\ref{fig:two-kinds-of-tree-classes} are non-equivalent. We now explain that in each of the two columns, all classes are equivalent under \mso encodings. 

\begin{figure}
    \begin{minipage}[t]{0.49\textwidth}
        \begin{enumerate}
            \item Trees as in Example~\ref{ex:trees}, with or without labels;
            \item Acyclic graphs as in Example~\ref{ex:acyclic-graphs};
            \item Hypergraphs with the following \emph{laminar} condition:  every two hyperedges are disjoint, or one is contained in the other.
        \end{enumerate}                
    \end{minipage}\quad
    \begin{minipage}[t]{0.48\textwidth}
        \begin{enumerate}
            \item Trees as in Example~\ref{ex:trees}, with or without labels, where  every node has at most two children;
            \item Connected acyclic graphs as in Example~\ref{ex:acyclic-graphs}, where every vertex degree  at most three;
            \item Trees as in Example~\ref{ex:trees}, with or without labels, except that for every node there is a linear order on its children.
        \end{enumerate}                
    \end{minipage}
    \caption{\label{fig:two-kinds-of-tree-classes} Two kinds of tree classes from Example~\ref{ex:two-kinds-of-trees}.}
\end{figure}

By far the most interesting equivalence  is the one for laminar hypergraphs. In this equivalence, we use the counting feature of \mso in a non-trivial way. In fact, we believe that counting is essential, and without it, the laminar hypergraphs from the example would no longer be equivalent to trees. This is another argument in favour of having counting as part of the definition of \mso transductions.



\subparagraph*{Left column.}
We begin with the first two items in the left column, namely trees (with or without labels), and acyclic graphs. (Laminar hypergraphs are discussed later.) The equivalence of acyclic graphs and unlabelled trees is left to the reader. (When encoding an acyclic graph in an unlabelled tree, we use nondeterminism to guess the roots.) 
Unlabeled trees are a special case of labeled ones. 
To encode labelled trees in unlabelled trees, we use the encoding that is explained in the following picture: 
    \mypic{1}
    
\subparagraph*{Right column.} We use the name \emph{ordered trees} for trees as in the third item of the right column, i.e.~where every node comes with a linear order on its children.  Labels can be eliminated as in the left column. To encode binary trees in ordered binary trees, an \mso transduction can use nondeterminism to guess the order, by using two colours ``first'' and ``second''. Conversely, ordered binary trees can be encoded in (unordered) binary trees by encoding the sibling order inside the tree structure. 
    Ordered binary trees are a special case of ordered trees. Finally, ordered trees can be encoded in labeled binary trees using the first-child-next-sibling encoding.

\subparagraph*{Laminar hypergraphs}
The rest of this section is devoted to proving that laminar hypergraphs are equivalent to trees, under \mso encodings.  In fact, we prove a slightly stronger result, namely that the class of laminar hypergraphs is isomorphic to the class of trees that satisfy:  (*) if a non-root node has exactly one child, then that child is a leaf. The latter class is easily seen to be equivalent, under \mso encodings, with the class of all trees, by using a straightforward encoding of trees into trees that satisfy (*). 
    
It remains to show the isomorphism between laminar hypergraphs and trees that satisfy (*). This isomorphism is explained in Figure~\ref{fig:laminar}.
\begin{figure*}
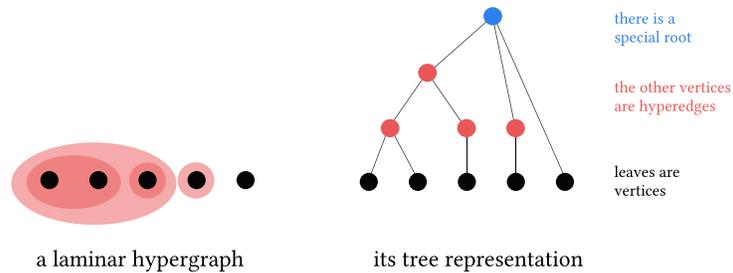

\mypic{5}    
\caption{\label{fig:laminar} A laminar hypergraph and its corresponding tree}
\end{figure*}
In the figure, the tree representation has node colours, but these can be defined even if they are not given,  because the leaves are black, the root is blue, and the remaining nodes are red. The tree representation is easily seen to be a bijection between laminar hypergraphs and trees that satisfy (*).  We will show that both directions of this bijection are in fact \mso transductions, thus proving that the two classes are isomorphic. To go from a tree representation to a laminar hypergraph, we simply use the leaves as the vertices, and as the hyperedges we use the sets that arise by taking some red node and returning its descendants.

The difficult part is producing the tree representation when given a laminar hypergraph. In the proof, we use tree terminology for hyperedges in a laminar hypergraph, such as descendant (smaller inclusion-wise hyperedge), child (maximal inclusion-wise descendant) or leaf (no children).   The difficulty is that the universe of a hypergraph is the vertices and not the hyperedges, and therefore the red nodes of the output tree (which correspond to hyperedges) must be represented using vertices of the input hypergraph. To overcome this difficulty, we will use modulo counting in an essential way.

To represent the hyperedges, we will distinguish between hyperedges that are \emph{branching} (at least two child hyperedges), and the remaining hyperedges which are called \emph{non-branching}.

Non-branching hyperedges can easily be represented in vertices: for each non-branching hyperedge we can choose a vertex that is in it but not in any other smaller hyperedge. If $X$ is the set of  vertices that arise using these choices, then each non-branching hyperedge can be represented by a vertex $x \in X$, namely as minimal inclusion-wise hyperedge that contains $x$. It remains to represent the branching hyperedges. This is solved by reducing branching edges to non-branching ones.

\begin{lemma}
    There exists an \mso formula 
    \begin{align*}
    \varphi(\myunderbrace{X_1,\ldots,X_4}{parameters},Y,Z),
    \end{align*}
    which uses modulo counting,
    such that for every laminar hypergraph, there exists a choice of the parameters $X_1,\ldots,X_4$ such that  for every branching hyperedge $Y$, there is a unique non-branching hyperedge  $Z$  which satisfies the formula.
\end{lemma}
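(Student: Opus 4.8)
The plan is to recover the tree structure of the laminar hypergraph by producing, for each branching hyperedge, a representative non-branching one. First I would pass to the \emph{hyperedge tree} $T$, whose nodes are the hyperedges ordered by containment; over the hypergraph this tree is \mso-definable, since $H_1$ is an ancestor of $H_2$ iff $H_2 \subsetneq H_1$, a child is a maximal proper sub-hyperedge, and ``branching'' (at least two child hyperedges) and ``leaf/minimal'' are then first-order. The leaves of $T$ are exactly the minimal hyperedges, which are non-branching, so it suffices to define an injection $\iota$ from branching hyperedges to leaves of $T$ and let $Z=\iota(Y)$. Because the tree satisfies (*), summing child-degrees shows that in every subtree the number of leaves exceeds the number of branching nodes by at least one; a greedy bottom-up (Hall-type) argument then produces an injection $\iota$ with $\iota(Y)$ inside the subtree of $Y$.

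To make $\iota$ \mso-definable I would realize it through a heavy-path decomposition: pick one \emph{heavy} child at each branching node, so the heavy edges partition the nodes into vertex-disjoint downward chains each ending at a distinct leaf; then route $Y$ to the leaf $\lambda(c_Y)$ at the bottom of the heavy path emanating from a \emph{designated light child} $c_Y$ of $Y$. This map is injective because distinct light children have distinct parents and distinct heavy paths end at distinct leaves, and it lands in the leaves as required. Given the heavy-child relation and the designated-light-child relation, the formula $\varphi(X_1,\dots,X_4,Y,Z)$ expresses ``$Z$ is minimal, $Z\subseteq c_Y$, and every hyperedge strictly between $Z$ and $c_Y$ passes its heavy child towards $Z$'', which is an \mso reachability statement in $T$ and pins down $Z$ uniquely. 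The four set parameters are meant to encode the two child-selections (heavy and designated-light), read off at each branching node by modulo-counting the parameters inside the candidate children.

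The hard part, and the only place where modulo counting is genuinely needed, is defining a \emph{child-selection} --- one child per branching node --- in the \emph{unordered} tree $T$. The naive idea, ``the heavy child is the unique child $c$ with $|c\cap X|$ odd'', does not work: for two children this is equivalent to ``the subtree parity is odd'', and for a private-vertex-free complete binary tree it would force every internal node to have odd subtree-parity, which is impossible since a node and its two internal children cannot all be odd. Thus a single parity test cannot mark the selection, and I would instead distribute the markers more globally. The key observation is that the parity $|H\cap X|\bmod 2$ is freely adjustable at every non-branching hyperedge and at every branching hyperedge owning a private vertex (a vertex in no smaller hyperedge), while only the private-vertex-free branching hyperedges impose the forced linear relation $x(H)=\sum_{\text{children}} x(\text{child})$ over $GF(2)$. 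I would exploit these degrees of freedom, spreading the heavy and light selections across the four parameters and resolving the forced nodes by a careful level-by-level consistency argument, so that the selection becomes simultaneously realizable by vertex sets and uniquely readable by an \mso formula using the counting predicates. Proving that such a realizable-and-readable selection always exists is the main obstacle; it is exactly the step that fails without modulo counting, which is why the lemma --- and with it the equivalence of laminar hypergraphs with trees --- is claimed to use counting in an essential way.
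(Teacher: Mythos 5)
Your reduction of the lemma to the readability of two child-selections is exactly the paper's strategy: the paper fixes a ``left'' and a ``right'' child at every branching node and sends $Y$ to the non-branching hyperedge reached by going to the left child and then following right children, which is your light-child-then-heavy-path routing in different clothing (your injectivity argument is fine, though the lemma as stated only needs uniqueness of $Z$ for each fixed $Y$). The \mso formula you describe is also fine \emph{provided} the selections can be read off the parameters. So the entire weight of the lemma rests on the one step you explicitly leave open: proving that a child-selection can be \emph{simultaneously} realized by vertex-set parameters and decoded by an \mso formula with modulo counting. ``A careful level-by-level consistency argument'' is not a proof; this is precisely where the paper does its real work, so the proposal has a genuine gap at the heart of the matter.

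Moreover, the route you commit to --- parities, i.e.\ counting modulo $2$ over four sets --- is the hard road. In any group of exponent two, two \emph{distinct} elements can never sum to zero, so the natural top-down realization argument (prescribe a value for a node, then split it among its children so that the chosen child gets the unique maximal value) provably breaks: whenever a branching node with exactly two children is handed the prescribed total $0$, its children cannot be given distinct values, and one can check in $\mathbb Z_2\times\mathbb Z_2$ that such prescriptions cannot always be avoided. This is a strengthening of your own complete-binary-tree counterexample, and it is why the paper does not use parities at all. Instead, the paper assigns each vertex a weight in $\mathbb Z_3$ (a weighting needs two set parameters, so the two selections cost four), defines the weight of a hyperedge as the mod-$3$ sum of its vertices' weights --- this is the only place modulo counting enters --- and proves by induction on the number of hyperedges, with the strengthened hypothesis that the total weight can be prescribed arbitrarily, that \emph{every} child-selection is realizable so that the chosen child is the unique sibling of maximal weight under $0<1<2$. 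The key algebraic fact is that $1+2=0$ with $1\neq 2$ in $\mathbb Z_3$: for prescribed total $0$ one gives the chosen child weight $2$, one sibling weight $1$, and all others $0$. Nothing in the lemma forces modulus $2$ (it only says ``uses modulo counting''), so the direct fix for your proposal is to replace the parity bookkeeping by this $\mathbb Z_3$ weighting; the paper remarks only in a footnote that modulus two can be rescued by a more involved construction, and that construction is exactly what your sketch would otherwise have to supply.
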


Once we have proved the lemma, we can easily get an \mso transduction that outputs the tree representation of a laminar hypergraph. It remains to prove the lemma. The parameters will be used to choose left and right children for each branching node, i.e.~to choose for each branching node two  distinct children, called the  \emph{left} and \emph{right} children. (If there are more than two children, the all but two of them will be neither left nor right.) Here is an example of such a choice, illustrated in the tree representation:
\mypic{7}
Once we have chosen left and right children, we can get a formula as in the lemma using the following procedure: for each branching hyperedge we produce a non-branching one by first going  to the left child, and then taking right children  until a non-branching hyperedge is reached. The procedure can be defined in \mso: the chosen non-branching hyperedge $Z$ is the unique one with the following property: for every intermediate  hyperedge $U$ with $Z \subseteq U \subsetneq Y$ we have
\begin{align*}
\text{$U$ is a child of $Y$}
& \quad \text{iff} \quad 
\text{$U$ is a left child}\\
\text{$U = Y$}
& \quad \text{iff} \quad 
\text{$U$ is non-branching}.
\end{align*}
To complete the proof of the lemma, it remains to show how left and right children can be chosen using a set parameter. 
In the construction, we  use a nondeterministically chosen weight assignment that assigns to each vertex some element of the  cyclic group of order three\footnote{The same construction would work for cyclic groups of order $\ge 3$, and a slightly more involved construction would work for order two. This shows that the \mso encoding of laminar hypergraphs in trees can use modulo counting for any, adversariarily chosen, modulus. }, which we denote by $\mathbb Z_3$. 
Suppose that we have such  a weight assignment (a weight assignment can be described using two set parameters, the namely vertices with weights $0$ and $1$). Define the \emph{weight} of a hyperedge to be the sum -- in the cyclic group -- of the weights of vertices in the hyperedge. This weight can be described in \mso using counting modulo three. By applying the following claim twice, once to get the left children and once to get the right children, we see that any choice of left and right children can be described by two weight assignments, which itself can be described using four set parameters.
\begin{claim}
Consider a laminar hypergraph, and a function that assigns to each  non-branching hyperedge some chosen child. One  can assign weights from $\mathbb Z_3$ to its vertices so that for every non-branching hyperedge, its chosen child has  weight that is different from the other children, and maximal under the order $0 < 1 < 2$.
\end{claim}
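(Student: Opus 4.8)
The plan is to prove the claim by a bottom-up induction over the laminar (inclusion) tree of hyperedges. I read the statement as concerning \emph{branching} hyperedges and their chosen children (for a non-branching hyperedge, having at most one child, there is nothing to enforce; the word ``non-branching'' in the statement appears to be a slip, since the surrounding lemma applies the claim to branching nodes). Writing $\mathrm{wt}(H)=\sum_{v\in H}w(v)$ in $\mathbb{Z}_3$, the condition to meet at a branching $Y$ with children $C_1=c(Y),C_2,\dots,C_k$ is $\mathrm{wt}(c(Y))>\mathrm{wt}(C_j)$ for every $j\ge 2$ under $0<1<2$; note that this single strict inequality yields both the distinctness and the maximality demanded. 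The inductive invariant I would carry is deliberately strong: for every hyperedge $H$ and every target $t\in\mathbb{Z}_3$ there is a weight assignment on the vertices of $H$ that satisfies all branching constraints lying strictly inside the subtree of $H$ and achieves $\mathrm{wt}(H)=t$; equivalently, the set of realizable total weights of $H$ equals all of $\mathbb{Z}_3$.

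The routine cases are those where $H$ possesses a \emph{private} vertex, i.e.\ one contained in no child (no maximal proper sub-hyperedge). Then I can first realize any desired children-configuration using the induction hypothesis and afterwards shift $\mathrm{wt}(H)$ to any target by adjusting that single private vertex. This covers the base case, since every minimal hyperedge is all-private (and I assume hyperedges are nonempty), and it also covers a non-branching $H$ with a single child $C$: as $H$ and $C$ are distinct members of the family with $C\subseteq H$, the inclusion is strict and $H\setminus C$ supplies a private vertex. Because the children of a fixed hyperedge are pairwise disjoint by laminarity, and the vertices of $H$ partition into the children's vertices together with its private vertices, the sub-assignments produced for distinct children by the induction hypothesis can be combined consistently, so the children's weights may be prescribed independently.

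The one delicate case, which I expect to be the crux, is a branching $H$ whose vertex set is exactly the union of its children, leaving no private vertex for slack. Here $\mathrm{wt}(H)=\sum_i a_i$ is forced by the children's weights $a_i=\mathrm{wt}(C_i)$, and I would finish with a short check that $\{\sum_i a_i : a_i\in\mathbb{Z}_3,\ a_1>a_j\ \text{for all } j\ge 2\}=\mathbb{Z}_3$: the configurations $(a_1,a_2,\dots)=(2,0,\dots)$, $(1,0,\dots)$, and $(2,1,0,\dots)$ give totals $2$, $1$, and $0$ while keeping the chosen child strictly largest, and each $a_i$ is attainable by induction. This is precisely where the modulus being at least three is essential: over $\mathbb{Z}_2$ with $0<1$ the strict-maximum constraint forces $a_1=1$ and all other children $0$, pinning the total to $1$ and breaking the invariant, which matches the footnote's remark that modulus two needs a more involved argument. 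Applying the invariant at the root with an arbitrary target then yields a global weight assignment satisfying every branching constraint, completing the proof of the claim.
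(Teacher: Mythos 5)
Your proof is correct and takes essentially the same route as the paper's: an induction that strengthens the claim so that the total weight of the (sub)hypergraph can be prescribed to be any element of $\mathbb{Z}_3$, witnessed by exactly the same three children configurations $(2,1,0,\ldots,0)$, $(1,0,\ldots,0)$ and $(2,0,\ldots,0)$. You are also right that ``non-branching'' in the statement is a slip for ``branching'', and your explicit handling of private vertices only spells out a detail that the paper's sketch glosses over.
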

\begin{proof}
By induction on the number of hyperedges, we prove a stronger version of the claim: not only can we achieve the desired property in the claim, but we can also ensure that the weight of the entire hypergraph is any group element. The induction basis, when there is only one hyperedge is straightforward. Consider now the induction step. Take a maximal hyperedge $X$ in the hypergraph, and let $X_1,\ldots,X_n$ be its children, with $X_1$ being the chosen child. Suppose that we want a weight assignment as in the claim, and we want the weight of $X$ to be some group element $a$.  Then apply the induction assumption to the children, using the weight assignments that are described in the following table:

\medskip
\begin{tabular}{l|l}
    $a$ & weights for $X_1,\ldots,X_n$ \\
    \hline $0$ & $2,1,0,0, \ldots, 0$ \\
    $1$ & $1,0,0,0, \ldots, 0$ \\
    $2$ & $2,0,0,0, \ldots, 0$ 
\end{tabular}
\medskip

\end{proof}

This completes the proof of the claim and the lemma, and therefore also the encoding of laminar hypergraphs in trees. We do not know if there is an encoding that does not use modulo counting, in fact we conjecture that there is no such encoding.

\section{Alternative definitions of recognizability}
\label{sec:ordered-footnote}
In this part of the appendix, we justify the comments about the definition of recognizability, which say that various constraints on the function $f$ in Definition~\ref{def:recognizability} would give the same notion. To compare these constraints, consider the following definition.

\newcommand{\transclass}{\mathscr F}
\begin{definition}\label{def:d-recognisability}
 Let $\Dd$ be a class of structures. We say that a language $L \subseteq \Cc$, not necessarily definable in \mso, is $\Dd$-recognizable if the composition 
 \[
 \begin{tikzcd}
 \Dd
 \ar[r,"f"]
 &
 \Cc 
 \ar[r,"L",dotted]
 &
 \Bool
 \end{tikzcd}
 \]
 is an \mso transduction for every $\Dd$-to-$\Cc$ \mso transduction. We say that $L$ is \emph{deterministic $\Dd$-recognizable} if the same statement holds, except that $f$ is required to be deterministic (i.e.~not using colouring).
\end{definition}

Using the above definition, we can identify at least four kinds of recognisability, by using either general or deterministic \mso transductions, and using trees or binary trees as the class $\Dd$. These notions are summarised in the following table, with the first notion being the one from 
 Definition~\ref{def:recognizability}.

 \medskip
 
 \noindent \begin{tabular}{c|c|c}
    name & recognisability from Definition~\ref{def:d-recognisability} & class $\Dd$ 
    \\
    \hline  recognizability & $\Dd$-recognizability & trees \\
    det. recognizability & det. $\Dd$-recognizability &  trees \\
    binary recognizability & $\Dd$-recognizability  &  binary trees \\
    det. binary recognizability & det. $\Dd$-recognizability  & binary trees 
 \end{tabular}

 \medskip 
 
 We now justify that all four notions in the table are equivalent.
 The proof has six equivalences: 
 \[
 \begin{tikzcd}
    [column sep=0.9cm]
 \txt{deterministic\\ recognizable }
 \ar[r,Rightarrow,bend left=20] 
 &
 \txt{recognizable}
 \ar[l,blue, Rightarrow,bend left=20]
 \ar[r,Rightarrow,bend left=20]
 &
 \txt{binary \\ recognizable }
 \ar[l,Rightarrow,bend left=20] 
 \ar[r,blue, Rightarrow,bend left=20]
 &
 \txt{deterministic\\  binary \\ recognizable }
 \ar[l,Rightarrow,bend left=20] 
 \end{tikzcd} 
 \]

 The blue implications are immediate, since there are fewer deterministic \mso transductions than general ones. The converses of the blue implications are shown below  using the fact that colours can be encoded in input trees. The middle left-to-right implication will be a straightforward consequence of the fact that there are fewer binary trees than general trees. The most interesting implication will be the middle right-to-left implication, from binary recognizable to recognizable. The detailed arguments are presented below.

 \subparagraph*{Deterministic recognizable implies recognizable.} Suppose that $L$ is deterministic recognizable, and consider some -- possibly nondeterministic -- \mso transduction $f$. Like any \mso transduction, $f$ is the composition of a colouring transduction followed by a deterministic \mso transduction. By encoding the colours in the tree structure of an unlabelled tree, we can decompose $f$ as a composition 
 \[
 \begin{tikzcd}
 \text{trees} 
 \ar[r,"f_1"]
 &
 \text{trees}
 \ar[r,"f_2"] 
 & 
 \Cc
 \end{tikzcd}
 \]
 where $f_2$ is deterministic. By the assumption that $L$ is deterministic recognizable, we know that $f_2;L$ is recognizable. Therefore, also 
 \begin{align*}
 \myunderbrace{f_1;f_2}{$f$};L
 \end{align*}
 is recognizable. The same proof works in the binary case.
 \subparagraph*{Recognizable implies binary recognizable.} Since binary trees admit an \mso encoding to trees, this part of the proof will follow from the following claim. 
 \begin{claim}\label{lem:recognizable-and-mso-reductions}
 If there is an \mso encoding from $\Dd_1$ to $\Dd_2$, then $\Dd_2$-recognizability implies $\Dd_1$-recognisability. Likewise for deterministic recognizability.
 \end{claim}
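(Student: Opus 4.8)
The plan is to prove Claim~\ref{lem:recognizable-and-mso-reductions} directly by a composition argument, which is the standard pattern for these "recognizability transfers along encodings" statements. Let $e : \Dd_1 \to \Dd_2$ be the \mso encoding with decoding $d : \Dd_2 \to \Dd_1$, so that $e;d$ is the identity on $\Dd_1$. Fix a class $\Cc$ and a language $L \subseteq \Cc$ that is $\Dd_2$-recognizable. To show $L$ is $\Dd_1$-recognizable, I take an arbitrary $\Dd_1$-to-$\Cc$ \mso transduction $f$ and must show that $f;L$ is an \mso transduction.

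The key observation is that $f$ can be rewritten using the retraction identity: since $e;d = \mathrm{id}_{\Dd_1}$, we have $f = e;(d;f)$ as transductions on $\Dd_1$. Now $d;f$ is a composition of two \mso transductions, hence an \mso transduction of type $\Dd_2 \to \Cc$. By the assumption that $L$ is $\Dd_2$-recognizable, applied to the \mso transduction $d;f$, the composition $(d;f);L$ is an \mso transduction of type $\Dd_2 \to \Bool$. Therefore
\[
f;L \;=\; e;\bigl((d;f);L\bigr)
\]
is a composition of the \mso transduction $e$ with the \mso transduction $(d;f);L$, and hence is itself an \mso transduction. Since $f$ was arbitrary, $L$ is $\Dd_1$-recognizable.

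The one point that needs care, rather than being a genuine obstacle, is the algebraic manipulation $f = e;(d;f)$: this is an equality of \emph{relations}, and it relies on $e;d$ being exactly the identity on $\Dd_1$ (not merely a subset or superset of it), which is precisely what Definition~\ref{def:mso-encoding} guarantees. I would spell out that precomposing both sides of $e;d = \mathrm{id}$ by nothing and postcomposing by $f$ preserves equality of relations, so $f = \mathrm{id};f = (e;d);f = e;(d;f)$ by associativity of relational composition. The deterministic case is identical: the encoding, decoding, and the given deterministic $f$ are all deterministic, and composition preserves determinism (no colouring is introduced), so $d;f$ is a deterministic \mso transduction and the assumption of deterministic $\Dd_2$-recognizability applies verbatim. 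There is no real difficulty here — the content of the claim is entirely captured by the retraction identity together with closure of \mso transductions under composition — so I expect the whole argument to fit in a short displayed diagram or a two-line calculation.
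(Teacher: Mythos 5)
Your proof of the main (nondeterministic) statement is correct and is essentially the paper's own proof: apply $\Dd_2$-recognizability to the \mso transduction $d;f$, then precompose with $e$ and use the retraction identity $e;d = \mathrm{id}_{\Dd_1}$ together with closure of \mso transductions under composition; the paper phrases this as ``$\text{decode};f;L$ is an \mso transduction by assumption, and its precomposition with the encoding equals $f;L$''.

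There is, however, a genuine gap in your one-sentence treatment of the deterministic case. You assert that ``the encoding, decoding, and the given deterministic $f$ are all deterministic'', but Definition~\ref{def:mso-encoding} imposes no determinism on encodings or decodings, and in this paper they typically do use colouring: the encoding of acyclic graphs into trees guesses roots, the encoding of (unordered) binary trees into ordered ones guesses the sibling order, and the matroid-to-matrix encoding guesses a basis. If the decoding $d$ uses colouring, then $d;f$ is not deterministic, so the hypothesis of deterministic $\Dd_2$-recognizability --- which only speaks of colouring-free transductions whose domain is $\Dd_2$ --- cannot be applied to it, and your argument breaks exactly there. The gap cannot be closed by writing $d;f$ in normal form as a colouring followed by a deterministic transduction, because the deterministic part then has \emph{coloured} $\Dd_2$-structures as its domain, again outside the scope of the hypothesis. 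In fact, some additional assumption is genuinely needed: take $\Dd_1 = \Cc$ to be cycles with one distinguished vertex and $\Dd_2$ to be plain cycles, with $e$ forgetting the distinguished vertex and $d$ guessing it. This is an \mso encoding, but no decoding can be made deterministic, since a colouring-free transduction cannot define a one-element set in a vertex-transitive graph; consequently every deterministic transduction from $\Dd_2$ to $\Cc$ has empty domain, so every language $L$ of pointed cycles is vacuously deterministically $\Dd_2$-recognizable, while a non-\mso such $L$ (say, pointed cycles of square size) fails deterministic $\Dd_1$-recognizability via $f = \mathrm{id}$. The statement that your composition argument does prove verbatim --- and which suffices for the appendix's applications, where the relevant decodings (forgetting colours, forgetting sibling orders) are colouring-free --- is the deterministic case under the extra hypothesis that the decoding $d$ is deterministic. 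You should either add that hypothesis or flag that the unrestricted deterministic claim does not follow from your argument.
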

 \begin{proof}
 Assume that a language $L \subseteq \Cc$ is $\Dd_2$-recognizable. We want to show that is also $\Dd_1$-recognizable, which means that for every \mso transduction of type $\Dd_1 \to \Cc$, its post-composition with $L$ is an \mso transduction. The relevant transductions are shown in the following diagram:
 \[
 \begin{tikzcd}
 \Dd_2 
 \ar[r,bend left=20,"\text{decode}"]
 & \Dd_1
 \ar[l,bend left=20,"\text{encode}"]
 \ar[r,"f"]
 &
 \Cc 
 \ar[r,"L"]
 &
 \set{\text{yes,no}}
 \end{tikzcd}
 \]
 By assumption on $\Dd_2$-recognizability, we know that 
 \begin{align*}
 \text{decode};f;L
 \end{align*}
 is an \mso transduction. Therefore, its pre-composition with the encoding, which is the same as $f;L$, is an \mso transduction.
 \end{proof}

 \subparagraph*{Binary recognizable implies recognizable.} This is the most interesting case. Since binary trees and ordered trees are equivalent under \mso encodings, in light of Claim~\ref{lem:recognizable-and-mso-reductions} it will be enough to show that ordered recognizable implies recognizable, where \emph{ordered recognizable} refers to $\Dd$-recognizability for the class of ordered trees. Suppose that 
 \[
\begin{tikzcd}
\Cc 
\ar[r,"L",dotted]
&
\Bool
\end{tikzcd}
\] 
 is ordered recognizable. Consider some \mso transduction 
 \[
 \begin{tikzcd}
 f : \text{trees} \to \Cc.
 \end{tikzcd}
 \]
 We want to show that $L;f$ is an \mso transduction. Consider the projection 
 \begin{align*}
 \pi : \text{ordered trees} \to \text{trees}
 \end{align*}
 which forgets the order in an ordered tree. By the assumption on ordered recognizability, the composition
\[
\begin{tikzcd}
\text{ordered trees}
\ar[r,"\pi"]
&
\text{trees}
\ar[r,"f"]
&
\Cc
\ar[r,dotted,"L"]
&
\Bool
\end{tikzcd}
\]
 is an \mso transduction, i.e.~an \mso definable language of ordered trees. This language of ordered trees is order invariant, i.e.~changing the sibling order in an ordered tree does not affect membership in the language. Since our variant of \mso has modulo counting built in, we can use a result of Courcelle~\cite[Corollary 4.3]{CourcelleX}, which says that if a language of ordered trees is order invariant and definable in \mso, then it can be defined in \mso without using the sibling order. Therefore, the corresponding language of (unordered) trees is definable in \mso.


\section{Matroids}
In this part of the appendix, we present the missing proofs for the results in Section~\ref{sec:matroids} about matroids. 

\subsection{Proof of Theorem~\ref{thm:repr-matroids}}
In this part of the appendix, we prove Theorem~\ref{thm:repr-matroids}, which says that for every finite field $\field$, the class of matroids over $\field$ under null representation is equivalent, under \mso encodings, to the class of binary relations, and admits an encoding into the class of matroids over $\field$ under incidence representation.

 Fix the finite field $\field$. In the proof, we use two other classes. The class of \emph{bipartite graphs} is the class of binary relations which describe a bipartite graph, together with a unary relation that distinguished the two parts of the bipartite graph. The class of \emph{matrices} (over the fixed field $\field$) is defined as follows: a \emph{matrix} to be two sets, the rows and the columns, and a function which assigns to each pair (row, column) a field element. A matrix is represented as a structure, where the universe is the disjoint union of the rows and columns, and for every field element $a \in \field$ there is a binary relation 
 which selects (row, column) pairs that have label $a$. 
 
 To prove Theorem~\ref{thm:repr-matroids}, we will show that following encodings:
 \[
 \begin{tikzcd}
 & \text{matrices} 
 \ar[dr]
 \\
 \txt{matroids over $\field$ under \\ null representation}
 \ar[ur]
 & 
 &
 \text{bipartite graphs}
 \ar[ll]
 \ar[lld]\\
 \txt{matroids over $\field$ under \\ independence representation}
 \end{tikzcd}
 \]
The two encodings that start in bipartite graphs are essentially the same one, i.e.~the encoding is capable of defining the null structure, but the decoding does not need to use it.
 
 \subparagraph*{From matrices to bipartite graphs.} A matrix is the same thing as a bipartite graph with edges coloured by field elements. We first show how the edge colours can be encoded in vertex colours, as explained in the following picture: 
 \mypic{3}
 Next, one can encode the vertex colours in an uncoloured graph. One way to do this is to use odd cycles of different colours for vertices with different colours; this works because the input graph has only even-length cycles. The details are left to the reader. 

 \subparagraph*{From bipartite graphs to matroids.} We will show how to encode a bipartite graph as a matroid. The encoding works for both the independence and null representation. Consider a bipartite graph, as in the following picture: 
\mypic{4}
The corresponding matroid is defined as follows. We use a matroid with elements coloured by two colours, but we explain later how to eliminate the colours. The elements of the matroid are the same as the vertices in the bipartite graph, with two colours for representing the left and right vertices. The elements that represent the right vertices are a basis of the underlying vector space. For each element that represents a left vertex, the corresponding vector is the sum of the basis vectors that correspond to its neighbours. It is easy to see that this matroid can be produced by an \mso transduction, using either the null or independence representation.

From this matroid, there is a decoding that recovers the original bipartite graph: for an element $x$ that represents a left vertex, its neighbours are recovered by taking the unique minimal dependent set that contains $x$ as the only left vertex. This decoding uses only the independence relation, so it will work if the matroid uses either the independence or null representation.

Finally, we explain how to encode the colours in the matroid. One way to do it is to use duplicate elements: for each element in the right side of the bi-partite graph, we have two copies that represent the same vector. These will be the only duplicate elements in the matroid (two elements are duplicates if they represent the same vector), assuming that the original bi-partite graph satisfied the assumption (*): there are no two elements on the left side of the graph that have equal neighbourhood sets. Bi-partite graphs that satisfy (*) are easily seen to be equivalent to general bi-partite graphs under \mso encodings.

\subparagraph*{From matroids to matrices.} Consider a matroid, given by its null representation. Choose a basis of the matroid, i.e.~a maximal set of independent matroid elements. Based on this basis, we can define a matrix: the rows are matroid elements, the columns are basis elements, and the cells describe the coefficients of each matroid element in its basis decomposition. This matrix, which is not unique because it depends on a choice of basis, can be produced by an \mso transduction, which uses the nondeterministic colouring to choose the basis. This transduction is an encoding, since the null representation can be recovered from the matrix.

 \subsection{Branchwidth as a special case of \mso width }
\label{ap:branchwidth-mso-width}
In this part of the appendix, we prove Theorem~\ref{thm:bounded-branchwidth-for-matroids}, which says that, in the case of representable matroids, our notion of \mso width coincides with the usual width notion for matroids, namely branchwidth. (Recall that this is no longer true for general matroids, as we remark after Theorem~\ref{thm:encodings-into-abstract-matroid}.) 

One way of proving this theorem would be to appeal to Theorem~\ref{thm:hypergraph-rankwidth}, about hyper-rankwidth for hypergraphs, and to show that for classes of representable matroids over a finite field, bounded branchwidth and bounded hyper-rankwidth coincide. (As in the previous remark, these two notions \emph{do not} coincide for general matroids. For general matroids, the appropriate combinatorial notion in the context of this paper is hyper-rankwidth and not branchwidth.)

We choose to present a different proof, which is longer, but it has the advantage that it also shows a second result, namely the version of Theorem~\ref{thm:algebraic-vs-transducer-recognizability} for representable matroids, i.e.~that the notion of recognisability from Definition~\ref{def:recognizability} coincides with the notion of recognizability from the literature~\cite{hlineny06}. This explains the meaning of the last row in Figure~\ref{fig:other-recognizabilities}.

Fix a finite field $\field$. In the proof, we work with the following algebra, inspired by~\cite{hlineny06}.

\begin{definition}
 [Branchwidth algebra] Define a $k$-ported matroid to be a matroid with $k$ distinguished elements. The \emph{branchwidth algebra} has one sort for each $k \in \set{0,1,\ldots}$, with this sort describing $k$-ported matroids. The operations are:
 \begin{enumerate}
 \item \textbf{Constants.} Every matroid with ports is a constant.
 \item \textbf{Renaming ports.} Consider any function 
 \begin{align*}
 \alpha : \set{1,\ldots,\ell} \to \set{1,\ldots,k}
 \end{align*}
 with $\ell, k \in \set{0,1,\ldots}$.
 This function yields an operation of type 
 \begin{align*}
 \text{$k$-ported matroids} \to \text{$\ell$-ported matroids}
 \end{align*}
 defined as follows: the matroid is not changed, and the $i$-th distinguished element in the output is the $\alpha(i)$-th distinguished element in the input.
 \item \textbf{Quotienting.} Consider a sequence of field elements $a_1,\ldots,a_k \in \field$. This sequence yields a operation of type 
 \begin{align*}
 \text{$k$-ported matroids} \to \text{$k$-ported matroids}
 \end{align*}
 defined as follows: the vector space in the matroid is quotiented by identifying with zero the vector obtained by taking the linear combination of distinguished elements with coefficients $a_1,\ldots,a_k$. 
 \item \textbf{Disjoint union.} For every $\ell,k \in \set{0,1,\ldots,}$ there is an operation of type 
 \begin{eqnarray*}
 & \text{($k$-ported matroids)} \times \text{($\ell$-ported matroids)} \to \text{$(k+\ell)$-ported matroids}
 \end{eqnarray*}
 which outputs the disjoint union of the two inputs, defined in the natural way.
 \end{enumerate}
\end{definition}
 
Adjusting for different notation, \cite[Theorem 3.4]{hlineny06} shows that a class of matroids representable over the field has bounded branchwidth if and only if it is contained in some finitely generated subalgebra of the branchwidth algebra. Also, the accepted notion of recognisability for matroids, see~\cite{hlineny06}, is recognisability in the branchwidth algebra. The main part of our proof will be the following lemma.

 \begin{lemma} \label{lem:brancwidth-algebra-is-legit} Let us view the sorts of the branchwidth algebra as classes of structures, using the null representation of matroids. The branchwidth algebra satisfies the two assumptions of Lemma~\ref{lem:technical-two-recognizabilities}. The same is true for the independence representation.
 \end{lemma}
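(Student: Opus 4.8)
The goal is to verify that the branchwidth algebra satisfies the two conditions of Lemma~\ref{lem:technical-two-recognizabilities}: namely, (1) that the value map from terms over any finite operation set $\Sigma$ to the sort $\Cc$ is an \mso transduction, and (2) that every deterministic \mso transduction from ordered binary trees into a sort can be factored through the value map of some finite $\Sigma$. I would organise the proof around these two conditions, treating the null representation as the primary case and remarking at the end that the independence representation follows since it is \mso-interpretable in the null representation (a set is independent iff it contains no nonempty null set, as noted in the body).

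For condition~(1), the plan is to show directly that an \mso transduction can read a $\Sigma$-term and build the null representation of its value. The key observation is that every element of the output matroid arises from exactly one leaf constant of the term, so the universe of the output matroid can be taken to be (a definable subset of) the elements contributed by the leaf constants; copying can supply the finitely many elements each constant introduces. The matroid structure is then reconstructed by interpreting the operations along the term. Disjoint union is immediate (it just unions the vertex sets and keeps each summand's null tuples). The renaming and quotienting operations only manipulate the finitely many distinguished ports, so their effect on which tuples are null can be computed by a bottom-up pass whose state at each node is bounded: I would argue that the relevant information travelling up the term is the behaviour of the ported matroid \emph{relative to its ports}, i.e.\ the linear-algebraic type of the port vectors together with the null predicate restricted to combinations involving ports. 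Since the field is fixed and finite and the number of ports is bounded along any fixed $\Sigma$, this type ranges over a finite set and is computable by a bottom-up tree automaton; an \mso formula can then define, for any tuple $X_1,\dots,X_n$ of leaf-element subsets, whether it is null in the value matroid. This makes the value map an \mso interpretation (after colouring/copying), hence an \mso transduction.

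For condition~(2), given a deterministic \mso transduction $f$ from ordered binary trees to the sort $\Cc$, I would use the bounded-width machinery: the image of $f$ has bounded \mso width, so by the branchwidth side (via \cite[Theorem~3.4]{hlineny06}, which equates bounded branchwidth with containment in a finitely generated subalgebra) the outputs of $f$ lie in a single finitely generated subalgebra, fixing a finite $\Sigma$. The content is then to produce, by an \mso transduction $g$, a $\Sigma$-term whose value is $f(t)$; essentially $g$ relabels the input tree by the branch-decomposition operations. Here I would lean on the fact that $f$ is deterministic and that a branch decomposition of the output matroid can be read off the tree structure of the input, assigning leaf constants to the matroid elements produced at each leaf and internal-node operations (union, renaming, quotienting) according to how the null structure of the two subtrees combines. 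The port information is again the finite linear-algebraic type described above, so the operation labels are \mso-definable from the input tree.

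The main obstacle, I expect, is condition~(2) and specifically making precise the claim that a deterministic \mso transduction into the matroid sort genuinely factors through a \emph{bounded-width} branch decomposition with \mso-definable operation labels. It is plausible but requires care that the number of ports needed stays bounded and that the port-type information is \mso-definable along the tree; this is exactly the Ehrenfeucht--Fra\"iss\'e / compositionality argument that underlies the Courcelle--Engelfriet proofs of Theorems~2.6 and~3.1 in~\cite{courcelle1995logical}, transported from the \hr- and \vr-algebras to the branchwidth algebra. I would therefore structure this step as a compositionality lemma: the \mso theory (up to some quantifier rank determined by $f$) of the ported matroid produced at a subtree, together with its port vectors' linear type over $\field$, is determined by the same data at the two children; finiteness of this combined state (for fixed rank and fixed field) yields the required bottom-up automaton and hence the factoring transduction $g$.
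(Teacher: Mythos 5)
Your handling of condition~(1) is essentially the paper's: a bottom-up tree automaton that tracks, for each subterm, the port-relative linear-algebraic information needed to decide whether given sets of leaf elements sum to a null vector, and the independence representation handled by forgetting the null information. No issues there.

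Condition~(2) is where the gap is, in two respects. First, your primary route is circular: \cite[Theorem~3.4]{hlineny06} equates containment in a finitely generated subalgebra with bounded \emph{branchwidth}, not with bounded \mso width. So the step ``the image of $f$ has bounded \mso width, hence its outputs lie in a finitely generated subalgebra'' silently uses the implication that bounded \mso width implies bounded branchwidth --- and that implication is precisely what the paper \emph{derives from} condition~(2) of this lemma; it is the nontrivial direction of Theorem~\ref{thm:bounded-branchwidth-for-matroids}, which is stated as a consequence of the lemma you are proving. As written, this step assumes the conclusion.

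Second, your fallback compositionality argument is the right idea (and is what the paper actually does), but you leave unproved exactly its load-bearing step: why the number of ports needed at each subtree stays bounded. Your finite-state claim ``for fixed rank and fixed field'' has no content until the port count is bounded, since the state space grows with the number of ports. The paper fills this hole in two moves: (i) an Ehrenfeucht--Fra\"iss\'e argument (the one from the proof of Theorem~\ref{thm:hypergraph-rankwidth}), showing that for an \mso interpretation $f$, cutting the input tree at any edge induces a partition of the output matroid of bounded \emph{sensitivity}; and (ii) Claim~\ref{claim:connectivity-vs-sensitivity}, showing that over a fixed finite field, sensitivity and connectivity bound each other by fixed functions, so bounded sensitivity yields bounded connectivity. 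Hence the input tree is itself a branch decomposition of bounded width, and its nodes can be relabelled by operations of the branchwidth algebra (with the invariant that each subtree generates the restriction of $f(T)$ to elements originating in it), which gives the factoring transduction $g$. Step~(ii) is exactly where representability over a finite field is used --- it fails for general matroids, as the paper remarks after Theorem~\ref{thm:encodings-into-abstract-matroid} --- so no argument that bypasses it can succeed. A further minor point: your reduction of the independence representation to the null representation works for condition~(1) (post-compose the value map with the interpretation defining independence from nulls), but not for condition~(2), where $f$ lands in independence representations and the reduction would need an \mso-definable map in the opposite direction, which the paper explicitly says is not known to exist; the paper's proof avoids this by running the entire argument on the input-tree side, uniformly for both representations.
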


 Before proving the lemma, we observe two consequences. 
 Thanks to the above lemma and Lemma~\ref{lem:technical-two-recognizabilities}, we see that weak recognizability in the branchwidth algebra coincides with our notion of recognizability from Definition~\ref{def:recognizability}; this is the meaning of the last row in the table from Figure~\ref{fig:other-recognizabilities}. Also, the above lemma implies that bounded branchwidth is equivalent to \mso width. As mentioned above, bounded branchwidth is equivalent to being contained in some finitely generated subalgebra of the branchwidth algebra. By assumption~\ref{assumption:mso-definable} in Lemma~\ref{lem:technical-two-recognizabilities}, this is a sufficient condition for bounded \mso width, and by assumption~\ref{assumption:factors-through-terms}, this is a necessary condition. It remains to prove Lemma~\ref{lem:brancwidth-algebra-is-legit}.

 \begin{proof}[Proof of Lemma~\ref{lem:brancwidth-algebra-is-legit}] We begin with the first assumption of Lemma~\ref{lem:technical-two-recognizabilities}, which says that for every sort $\Cc$ of the branchwidth algebra, and every finite set $\Sigma$ of operations in the algebra, the function of type 
 \begin{align*}
 \text{terms over $\Sigma$ of sort $\Cc$} \to \Cc,
 \end{align*}
 which maps a term to its value, is an \mso transduction. The same is true for the independence representation.
 The case of the null representation is proved in a standard way.
 For every $n \in \set{1,2,\ldots}$ one can write a tree automaton which inputs a term together with a $n$ distinguished sets of leaves (constant) and answers whether or not adding the vectors corresponding to these $n$ sets gives a null vector. The case of independence representation follows, since we can always forget the null information and keep only the independence information.

 Consider now the second assumption. This assumption says that for every sort $\Cc$, every deterministic \mso transduction 
 \begin{align*}
 f : \text{ordered binary trees} \to \Cc
 \end{align*}
 can be factored through a function as in the previous item, i.e.~there is a finite set of operations $\Sigma$ and an \mso transduction $g$ which makes the following diagram commute 
 \[
 \begin{tikzcd}
 \text{ordered binary trees} 
 \ar[r,"f"] 
 \ar[d,"g"]
 &
 \Cc \\
 \text{terms over $\Sigma$ with sort $\Cc$} 
 \ar[ur,"\text{value}"']
 \end{tikzcd}
 \]
 The proof for this is the same as in Theorem~\ref{thm:hypergraph-rankwidth}. As in that proof, it is enough to consider the case when $f$ is an \mso interpretation, i.e.~it does not use colouring, copying or filtering. As in that proof, for input tree for $f$, if we partition that tree into two parts across a single edge, then the resulting partition of the elements in the output matroid will have bounded sensitivity, equivalently, bounded hyper-rankwidth. For matroids representable over a fixed field, sensitivity and connectivity are roughly the same measure as stated in the following claim. 

 \begin{claim}\label{claim:connectivity-vs-sensitivity}
 For every finite field $\field$, \begin{align*}
 \text{sensitivity of $X_1 \cup X_2$} \le 
 \text{connectivity of $X_1 \cup X_2$} \le 
 1 + |\field|^{\text{sensitivity of $X_1 \cup X_2$}}.
 \end{align*}
 holds for every partition $X_1 \cup X_2$ of the elements in a matroid representable\footnote{The claim fails in general matroids. For the same reasons as explained after Theorem~\ref{thm:encodings-into-abstract-matroid}, in general matroids of rank at most four one can have partitions of unbounded sensitivity; however all of these partitions will necessarily have connectivity at most four. 
 } over $\field$.
 \end{claim}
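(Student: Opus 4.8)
The plan is to recast the claim as a fact about linear algebra over $\field$. In a matroid represented over $\field$ each element $e$ carries a vector $v_e$, and for a set $S$ I write $\langle S\rangle$ for the span of its vectors. First I would record the identity $\mathrm{connectivity}(X_1,X_2)=\dim(\langle X_1\rangle\cap\langle X_2\rangle)$, which is immediate from $\mathrm{rank}(X_i)=\dim\langle X_i\rangle$, $\mathrm{rank}(X_1\cup X_2)=\dim(\langle X_1\rangle+\langle X_2\rangle)$ and the modular law. Write $W=\langle X_1\rangle\cap\langle X_2\rangle$ for this \emph{cut space} and $\lambda=\dim W=\mathrm{connectivity}$, so that $|W|=|\field|^{\lambda}$.

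Next I would compute the sensitivity of the cut directly from $W$. A one-sided object interacts with the other side through the null relation, and the purpose of the $n$-ary relation ($n=|\field|-1$) is that placing an element into the slot indexed by a nonzero scalar $a$ makes it contribute $a\cdot v_e$; hence the configurations on the $X_1$-side (resp.\ the $X_2$-side) realize exactly the vectors of $\langle X_1\rangle$ (resp.\ $\langle X_2\rangle$). Two $X_1$-configurations realizing vectors $u,u'$ are Myhill--Nerode equivalent iff for every $w\in\langle X_2\rangle$ one has $u=-w\iff u'=-w$; unwinding, this holds iff $u=u'$, or else both $u,u'\notin\langle X_2\rangle$. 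The equivalence classes are therefore: one singleton class for each of the $|W|$ vectors lying in $W$, together with a single collapsed class gathering all configurations whose realized vector escapes $\langle X_2\rangle$ (present iff $\langle X_1\rangle\not\subseteq\langle X_2\rangle$). Thus $\mathrm{sensitivity}=|\field|^{\lambda}+\varepsilon$ with $\varepsilon\in\{0,1\}$.

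The two inequalities are then bookkeeping, and I would establish the sandwich in the form $\mathrm{connectivity}\le\mathrm{sensitivity}\le 1+|\field|^{\mathrm{connectivity}}$, which is the orientation consistent with the hypergraph bound $\mathrm{rank}\le\mathrm{sensitivity}\le 2^{\mathrm{rank}}$ used in Theorem~\ref{thm:hypergraph-rankwidth}: the lower bound is $\lambda\le|\field|^{\lambda}\le\mathrm{sensitivity}$, and the upper bound is $\mathrm{sensitivity}=|\field|^{\lambda}+\varepsilon\le 1+|\field|^{\mathrm{connectivity}}$. For the branchwidth-algebra argument only the resulting fact is needed, namely that sensitivity and connectivity are within an exponential of one another, so that bounded connectivity across every cut (bounded branchwidth) is equivalent to bounded sensitivity (bounded hyper-rankwidth); the orientation in which the two quantities are written is immaterial to that use.

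The main obstacle, and the only step with genuine content, is the realization statement underlying the sensitivity computation. Over $\mathbb{F}_2$ it is trivial, since there the subset sums of one side already fill its span; over a larger field it truly requires the $n$-ary null relation to produce every scalar coefficient, and one must check both that every cross-cut difference lying in $\langle X_2\rangle$ is exposed by some context and that all configurations escaping $\langle X_2\rangle$ really collapse to a single class rather than splitting by coset. This is exactly where representability enters: the linear structure caps the ``internal'' behaviours at $|W|=|\field|^{\mathrm{connectivity}}$ and forces a single escaping class. For general matroids no such cap holds---as the footnote observes, connectivity can remain bounded (say at most four) while sensitivity is unbounded---so the statement is special to the representable setting. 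Finally, the independence representation is treated the same way, since across a cut the relevant linear dependences are again governed by $W$; alternatively one transports the null-representation argument along the \mso-definable passage from the null relation to the independence relation.
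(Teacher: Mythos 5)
Your proof is correct and takes essentially the same route as the paper's: identify the connectivity with $\dim W$ for the interface subspace $W = \langle X_1\rangle \cap \langle X_2\rangle$, obtain the lower bound because each of the $|\field|^{\dim W}$ vectors of $W$ is realized on the $X_1$-side and separated by a context from the $X_2$-side, and the upper bound because a Myhill--Nerode class is determined by whether its realized vector lies in $W$ and, if so, by which vector it is (your exact computation $\text{sensitivity} = |\field|^{\dim W} + \varepsilon$ is just a sharper packaging of these two bounds). You are also right that the inequality as printed has the two quantities swapped: the paper's own proof, like yours, establishes $\text{connectivity} \le \text{sensitivity} \le 1 + |\field|^{\text{connectivity}}$, which matches the orientation of the hypergraph bound~\eqref{eq:sensitivity-vs-rank} and is what is actually used in the proof of Lemma~\ref{lem:brancwidth-algebra-is-legit}.
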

 \begin{proof}
 Consider a partition $X_1 \cup X_2$. The connectivity of this partition is equal to the rank of the following subspace of the representing vector space: 
 \begin{equation}
 \label{eq:interface-subspace}
 \bigcap_{i = 1,2} \text{vectors spanned by the elements of $X_i$}
 \end{equation}
 vectors that are spanned by subsets of $X_1$, and also spanned by subsets of $X_2$. The number of different vectors in this subspace is bounded by the sensitivity, and therefore the connectivity is at most the sensitivity. For the converse inequality, we observe that the equivalence class of subset $Y \subseteq X_1$ with respect to the equivalence class in the definition of sensitivity is uniquely determined by two pieces of information:
 \begin{enumerate}
 \item is the sum of vectors representing elements of $Y$ equal to some vector in the subspace~\eqref{eq:interface-subspace};
 \item if yes, then which vector is it equal to.
 \end{enumerate}
 The number of possible values for this information is bounded by the quantity at the end of the inequality in the statement of the claim.
 \end{proof}
 
 By the above claim,  for every binary tree $T$, if we view $T$ as a branch decomposition of the matroid $f(T)$, then this decomposition has branchwidth that is bounded by a constant depending only on $f$. Finally, using the correspondence of branchwidth decompositions with terms in the branchwidth algebra, we can relabel the nodes of $T$ with operations in the branchwidth algebra that construct the matroid $f(T)$. In this relabeling, the invariant is that a subtree of the $T$ generates the matroid obtained from $f(T)$ by restricting it to elements with origin in the subtree. 

 This completes the proof of Lemma~\ref{lem:brancwidth-algebra-is-legit}.
\end{proof}

 


\end{document}